\documentclass[a4paper,USenglish]{lipics-v2021}

\nolinenumbers\hideLIPIcs

\usepackage{microtype}
\usepackage[T1]{fontenc}
\usepackage[utf8]{inputenc}
\usepackage{hyperref}
\hypersetup{
    colorlinks=true,
    linkcolor=blue
    }
\usepackage{mathtools}
\usepackage{amsfonts}
\usepackage{amsthm}

\newtheorem{fact}[theorem]{Fact}
\newtheorem{obs}[theorem]{Observation}
\usepackage[noend]{algpseudocode}
\usepackage{algorithm}

\hypersetup{
    colorlinks=true,
    linkcolor=blue
    }

\DeclarePairedDelimiter{\abs}{\lvert}{\rvert}
\DeclareMathOperator{\FVD}{\textnormal{FVD}}
\DeclareMathOperator{\polylog}{polylog}
\DeclareMathOperator{\radius}{\opfont{radius}}
\DeclareMathOperator{\width}{width}
\DeclareMathOperator{\cntr}{center}
\newcommand{\TFVD}{T_{\FVD}}
\newcommand{\dF}{d_{\textnormal{dF}}}

\let\eps\varepsilon
\newcommand{\dm}{g} %

\usepackage{graphicx}

\newcommand{\reals}{\mathbb{R}}
\newcommand{\sphere}{\mathbb{S}}

\newcommand{\bydef}{\coloneqq}

\usepackage[dvipsnames]{xcolor}

\newcommand{\opfont}[1]{\texttt{\textup{#1}}}
\newcommand{\disk}{\opfont{disk}}
\newcommand{\ball}{\opfont{ball}}
\newcommand{\prefix}{\opfont{prefix}}
\newcommand{\feasible}{\opfont{feasible}}
\newcommand{\fits}{\opfont{fits?}}

\title{A Dimension-Reducing Fr\'echet Simplification Oracle\footnote{%
  A preliminary version of this paper appeared in the \emph{Proc. 36th International Symposium on Algorithms and Computation} (ISAAC),
                  2025, 6:1--6:20.}}
\titlerunning{A dimension-reducing Fr\'echet simplification oracle}

\author{Boris Aronov}{Department of Computer Science and Engineering, Tandon School of Engineering, New York University, Brooklyn, NY 11201 USA.}
{boris.aronov@nyu.edu}
{https://orcid.org/0000-0003-3110-4702}
{Work partially supported by NSF Grant CCF-20-08551.}

\author{Tsuri Farhana}{The Stein Faculty of Computer and Information Science, Ben-Gurion University of the Negev, Beer Sheva, Israel.}
{tsurif@post.bgu.ac.il}
{https://orcid.org/0009-0006-7246-6805}
{}

\author{Matthew J. Katz}
{The Stein Faculty of Computer and Information Science, Ben-Gurion University of the Negev, Beer Sheva, Israel.}
{matya@bgu.ac.il}
{https://orcid.org/0000-0002-0672-729X}
{Work partially supported by Grant 2019715/CCF-20-08551 from the US-Israel Binational Science Foundation/US National Science Foundation.}

\author{Indu Ramesh}{Department of Computer Science and Engineering, Tandon School of Engineering, New York University, Brooklyn, NY 11201 USA.}{ir914@nyu.edu}{https://orcid.org/0009-0008-9967-0819}{Work supported by NSF Grant CCF-20-08551.}

\authorrunning{B. Aronov, T. Farhana, M.J. Katz, and I. Ramesh}
\Copyright{Boris Aronov, Tsuri Farhana, Matthew J. Katz, and Indu Ramesh}

\ccsdesc[100]{Theory of computation~Computational geometry}

\keywords{Computational geometry; discrete Fr\'echet distance; curve simplification oracle; restricted minimum enclosing disk queries}

\begin{document}

\maketitle

\begin{abstract}
    Let $P$ be a polygonal curve with $n$ vertices in the plane. We construct a data structure of size~$O(n \log n)$ suited for simplification queries of the following kind. Given a query line $\ell$ and an integer $k\ge1$, find a curve $Q$ on $\ell$ with at most $k$ vertices that minimizes the discrete Fr\'echet distance to $P$, among all such curves. Using our data structure, a query can be handled in $O(k^2 \log^3 n + k\log^4 n)$ time.  

    More generally, a geometric tree $T$ on $n$ vertices in the plane can be preprocessed into a near-linear-size structure so that, given a pair $u$, $v$ of its vertices, a line $\ell$, and an integer $k\ge1$, one can find a curve $Q$ on $\ell$ with at most $k$ vertices that minimizes the discrete Fr\'echet distance to the path from $u$ to $v$ in~$T$, in time $O(k^2 \polylog n)$. 

    For the general \emph{dimension-reduction} problem,  where $P$ is a curve in $\reals^d$ ($d \ge 3$), $0 < \eps_0 < 1$ is a real parameter, and a query specifies a $g$-flat $h$ ($1 \le g \le d-1$) and an integer $k \ge 1$, we construct a data structure of size $O(n\log n + f(\eps_0) n)$, where $f(\eps_0)=(1+1/\eps_0)^{(d-1)/2}$, that allows us to find a curve $Q$ on $h$ with at most $k$ vertices, whose discrete Fr\'echet distance to~$P$ is at most $1+\eps_0$ times the distance of $Q^*$ to $P$, where $Q^*$ is such a curve that minimizes the distance to $P$. The query handling time is $O(f(\eps_0) k^2 \log^2 n)$.
\end{abstract}

\section{Introduction}

Curve simplification is an important and extensively studied problem in computational geometry and related fields. Here, we restrict our attention to curve simplification with the quality of the simplification measured using the discrete Fr{\'e}chet distance between the original curve and its simplification. See below for a comprehensive survey of research in this area. 

In this paper, we introduce a novel class of curve simplification problems, in which the simplified curve is required to lie within a specified subspace or region. More formally, let $P$ be a polygonal curve in~$\reals^d$. We are interested in problems where, in addition to standard parameters such as a threshold on the distance between~$P$ and the desired simplification~$Q$ or a bound on the length (i.e., number of vertices) of~$Q$, $Q$ must also reside in a designated subspace or region. Moreover, since it is often desirable to find the optimal simplification of~$P$ for multiple subspaces or regions, we concentrate on the multi-shot formulation of these problems.

These questions can also be viewed as \emph{dimension-reduction} problems. That is, we wish to replace an input $d$-dimensional curve by a curve in some lower-dimensional space, while minimizing the Fr\'echet distance to the input curve.

Specifically, we focus on the following problem. 
Let $P$ be a polygonal curve in $\reals^d$. Preprocess $P$ into a compact data structure to efficiently support queries that specify a $g$-flat\footnote{That is, a $g$-dimensional flat---a $g$-dimensional affine subspace of $\reals^d$.} $h$ ($1 \le g \le d-1$) and an integer $k\ge 1$, and seek a curve $Q$ on $h$ of length at most $k$ that minimizes the discrete Fr{\'e}chet distance to~$P$.  For $d=2$, the $g$-flat $h$ is simply a line~$\ell$ in the plane, and we present an efficient solution to the problem, as well as to the more general problem in which the input is a geometric tree $T$ rather than a curve~$P$. In the latter problem, a query also specifies the path in the tree to which the query applies. 
For $d > 2$, we present an efficient solution that provides approximate answers to queries, that is, given $h$ and $k$, it returns a curve $Q$ on $h$ with at most $k$-vertices which is approximately the best such curve on $h$, in terms of its discrete Fr\'echet distance to $P$.   

\subparagraph*{Basic definitions.}
In this paper we view a finite sequence of points $P=(p_1, \ldots, p_n)$ in $\reals^d$ as a polygonal curve; we refer to $n$ as the 
\emph{size} or \emph{length} of $P$ and the points as its \emph{vertices}.

Let $P=(p_1, \ldots, p_n)$ and $Q=(q_1,\ldots,q_m)$ be polygonal curves in $\reals^d$. A \emph{legal walk} of $P$ and $Q$ (or just \emph{walk}, for short) is a monotonically increasing sequence of pairs $(r_1,\ldots,r_l)$, where $r_1=(p_1,q_1)$, $r_l=(p_n,q_m)$, and, in general, when advancing from $r_k=(p_i,q_j)$, $r_{k+1}$ is one of the following: $(p_{i+1},q_j)$ (provided $i<n$), $(p_i,q_{j+1})$ (provided $j < m$), or $(p_{i+1},q_{j+1})$ (provided $i<n$ and $j<m$). The \emph{cost} of a walk is the maximum distance among the distances between the vertices of each pair $r_k = (p_i,q_j)$ in the walk. The \emph{discrete Fr\'echet distance} $\dF(P,Q)$ between $P$ and $Q$ is the cost of the minimum-cost walk of $P$ and $Q$.

Given a $g$-flat $h$, $1 \le g \le d-1$, we say that $Q$ is a \emph{$k$-vertex curve on $h$} if $Q=(q_1,\dots,q_k)$ is a sequence of~$k$ points on $h$. An \emph{at-most-$k$-vertex curve on $h$} is a $k'$-vertex curve on $h$ with $1\leq k' \leq k$.

\subparagraph*{Main problem statement.}

\begin{figure}[ht]
	\centering
	\includegraphics[scale=0.8]{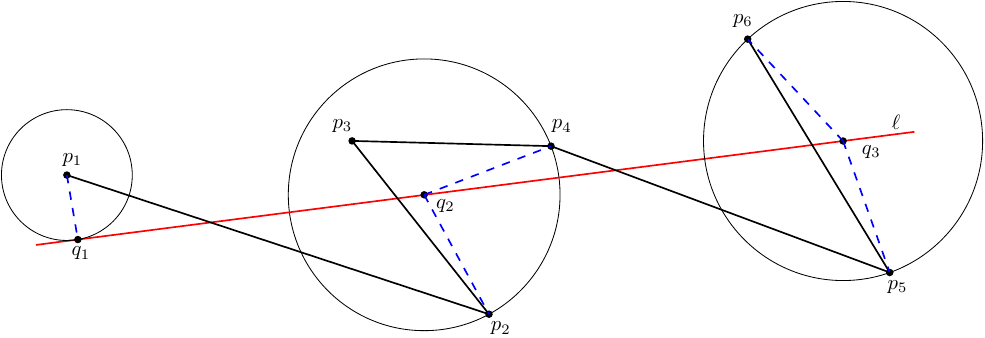}
	\caption{A polygonal curve $P=(p_1,\ldots,p_6)$ and a query line $\ell$.  In this example, $k=3$ and $Q=(q_1,q_2,q_3)$ is the returned curve.}
	\label{fig:line_query}
\end{figure}

In this paper we study the following problem.  Given a curve~$P$ in~$\reals^d$ as above, preprocess it into a near-linear size data structure to efficiently support queries of the following type: 
``Given a $g$-flat $h$, $1 \le g \le d-1$, and integer $k\ge1$, find an at-most-$k$-vertex curve $Q$ on $h$ minimizing $\dF(P,Q)$.''
We first focus on the planar version, where $P$ is a curve in the plane and queries are lines. See Figure~\ref{fig:line_query} for an example.

We also consider the more general setting, still in the plane, where the input is a geometric tree $T$ rather than a curve $P$, and the queries are of the form: ``Given two nodes~$p,q$ of the tree, a line~$\ell$, and integer~$k$, find an at-most-$k$-vertex curve $Q$ on $\ell$ minimizing the discrete Fr\'echet distance between~$Q$ and the path from $p$ to $q$ in $T$.''

In other words, we first study problems in which we need to preprocess a given curve $P$ (or tree $T$) in the plane, so that one can efficiently handle \emph{simplification} queries with respect to a query line. The maximum length $k$ of the desired simplification is also a parameter of the query. 

Next, we further generalize the problem so that a query may specify an arbitrary polygonal region in which $Q$ must reside; query time will of course depend on the complexity of the region.

Finally, we consider the problem in three and higher dimensions, where $P$ is a curve in~$\reals^d$, $d \ge 3$, and queries are $g$-flats, $1 \le g \le d-1$. Since we are interested in a near-linear size data structure, we resort to approximate queries (see discussion in Section~\ref{sec:approx}), with a prespecified error parameter $\eps_0>0$. That is, the answer to a query is an at-most-$k$-vertex curve $Q$ on $h$, such that $\dF(P,Q) \le (1+\eps_0)\dF(P,Q^*)$, where $Q^*$ is such a curve minimizing the distance to $P$.

\subparagraph*{Our results.}
We present efficient solutions to these problems. Specifically, given $P$ or $T$ in the plane, we construct a data structure of size $O(n\log n)$ that allows us to answer a query in $O(k^2 \polylog n)$ time; refer to Theorem~\ref{thm:path} in Section~\ref{sec:path} and Theorem~\ref{thm:tree} in Section~\ref{sec:trees} for the exact statements.
As an intermediate result, we present an algorithm for the corresponding decision problem (with respect to a curve $P$ or tree $T$). For example, given a line~$\ell$ and distance~$r$, we can determine in $O(k \log^2 n)$ time whether there exists an at-most-$k$-vertex curve on $\ell$ within discrete Fr\'echet distance at most $r$ from~$P$, and if so, return such a curve~$Q$ with fewest vertices.
In Section~\ref{sec:region} we extend the data structure so that a query may specify a polygonal region in the plane that restricts the location of the curve $Q$.

In higher dimensions, given $P$ in $\reals^d$ and $0 < \eps_0 < 1$, we construct a data structure of size $O(n\log n + \frac{n}{\eps^{(d-1)/2}})$, where $\eps = \frac{\eps_0}{1+\eps_0}$, that allows us to answer a query approximately in $O(\frac{k^2\log^2 n}{\eps^{(d-1)/2}})$ time, that is, the discrete Fr\'echet distance between $P$ and the curve that we return is at most $1+\eps_0$ times the distance between $P$ and the optimal curve; refer to Theorem~\ref{thm:path_higher} in Section~\ref{sec:approx} for the exact statement.  Our solution uses coresets.

\subparagraph*{Related work on curve simplification.}
We focus here on (global) simplification under the Fr{\'e}chet distance in the plane, where the quality of a simplification is the Fr{\'e}chet distance between the original curve $P$ and its simplification $Q$.

There are two main versions of the simplification problem. In the first, we are given a threshold $\delta$ and the goal is to compute a curve $Q$ of minimum length, such that the distance between $P$ and $Q$ is at most $\delta$. In the second, we are given an integer $k \ge 1$ and the goal is to compute a curve $Q$ of length at most $k$ that minimizes the distance to $P$.
Moreover, each of the two main versions gives rise to four standard variants, depending on whether we are using the continuous or discrete Fr{\'e}chet distance, and whether $Q$ is \emph{restricted} (i.e., a subsequence of $P$) or \emph{unrestricted} (i.e., any sequence of points);
Van~de~Kerkhof et al.~\cite{KerkhofKLMW19} also consider an intermediate \emph{curve-restricted} option.

The following results are for the first version under the continuous Fr{\' e}chet distance. Bringmann and Chaudhury~\cite{BringmannC20} presented an algorithm with running time $O(n^3)$ for restricted simplification and established a conditional cubic
lower bound.
An algorithm with the same running time was also proposed by Van~de~Kerkhof et al.~\cite{KerkhofKLMW19} (see also Van Kreveld et~al.~\cite{KreveldLW20}). 
For unrestricted simplification, Guibas et al.~\cite{GuibasHMS93} provide an algorithm with running time $O(n^2\log^2 n)$, and Agarwal et~al.~\cite{AgarwalHMW05} provide an $O(n \log n)$ algorithm, which returns a simplification $Q'$ of length at most the length
of an optimal simplification with threshold~$\delta/8$. The latter two statements can be deduced from the corresponding papers, as observed by Van~de~Kerkhof et al.~\cite{KerkhofKLMW19}.

Bereg et al.~\cite{BeregJWYZ08} studied both restricted and unrestricted simplification for each of the two main versions, under the discrete Fr{\' e}chet distance in three-dimensional space. For the first version, they compute an optimal unrestricted simplification in $O(n \log n)$ time and an optimal restricted simplification in $O(n^2)$ time. For the second version, they compute an optimal unrestricted simplification in $O(kn \log n \log (n/k))$ time, where $k$ is the length of the computed simplification, and an optimal restricted simplification in~$O(n^3)$ time.
Also under the discrete Fr{\' e}chet distance, Fan et al.~\cite{FanFKWZ15} considered the so called \emph{chain-pair simplification problem} in the plane, where the goal is to simultaneously compute restricted simplifications of length $k$ for two input curves, such that the distance between the simplifications themselves is also below some threshold.

Finally, under the continuous Fr{\' e}chet distance, Driemel and Har{-}Peled~\cite{DriemelH13} presented 
a near-linear time algorithm that computes a permutation of the vertices of $P$, such that any prefix of $2k-1$ vertices of this permutation is an approximation (up to a constant factor) of $P$ compared to any polygonal curve with $k$ vertices. Subsequently, Filtser~\cite{Filtser18} slightly improved the approximation factor under the discrete Fr{\' e}chet distance.

\subparagraph*{More related work.}
A series of papers \cite{constrained-cirle-RoyKDN09,circle-query-ipl-KD08,query-BLR08} culminates in an algorithm that preprocesses an $n$-point set $S$ in the plane in linear space and $O(n\log n)$ time to support $O(\log n)$ time queries of the form ``What is the smallest circle centered on the query line and enclosing~$S$?'' 

There is also a substantial amount of work on so-called \emph{range-aggregate} queries. Given an $n$-point set $S$ in the plane, we want to preprocess it for queries of the following type: compute some aggregate quantity about the points of $S$ contained in an axis-aligned query rectangle.  While range counting, reporting, and more general semigroup queries have been widely studied~\cite{agarwal2017range}, some types of queries are not readily decomposable (in the sense of the answer being easily synthesized from the answer for smaller queries).  For example,  Brass~et~al.~\cite{aggr13} discuss computing the area (or perimeter) of the convex hull, width, and smallest enclosing disk of the points of $S$ in the query rectangle.  They describe algorithms for preprocessing $S$ into a data structure of near-linear size with polylogarithmic-time queries, for all of the above queries except for the width; the width data structure is more expensive. Gupta~et~al.~\cite{aggr14} explore several variants of closest-pair-in-range problems.  For example, they show a near-linear-space structure for computing the closest pair of points within the query rectangle in two dimensions with polylogarithmic query time. Our observation in Remark~\ref{rem:aux_result} is a further generalization in this direction: Given a point set in the plane, we can preprocess it so that, given a query rectangle \emph{and} a query line, we can determine the smallest enclosing circle of the points in the rectangle, among all circles centered on the line.

\subparagraph*{Overview and organization.}
We first consider the planar version of the dimension-reduction problem. To answer a \emph{decision} query of the form ``given a line $\ell$, an integer $k$, and a distance $r > 0$, determine whether there exists an at-most-$k$-vertex curve on $\ell$ within Fr\'echet distance at most $r$ from $P$,'' we need to repeatedly perform the primitive operation $\prefix_\ell(P[i,n],r)$ for different values of $i$: Find the longest (contiguous) subcurve of $P$, beginning at $p_i$, that fits into a disk of radius $r$ centered at a point of $\ell$. In Section~\ref{sec:factsandtools}, we describe our data structure for the planar version and then use it to implement $\prefix_\ell(P[i,n],r)$ in $O(\log^2 n)$ time. Our data structure is obtained by combining known data structures in a non-trivial manner. In Section~\ref{sec:path}, we obtain our first main result, i.e., the solution for the planar version of the dimension-reduction problem. Using the data structure and primitive operations developed in Section~\ref{sec:factsandtools}, we present efficient algorithms for handling both decision queries and `regular' queries, where the algorithm for regular queries is based on the one for decision queries and on the primitive operation $\radius_\ell(P[i,j])$ (developed in Section~\ref{sec:factsandtools}) which, given a (contiguous) subcurve of $P$, returns the radius of the smallest disk centered at a point of $\ell$ that contains the subcurve. In Section~\ref{sec:approx}, we obtain our second main result, i.e., the solution for the $d$-dimensional version, $d \ge 3$, of the dimension-reduction problem. The algorithms in this general version are similar to those in the planar version, however, we need to replace the exact primitive operations above by their approximate $d$-dimensional counterparts.
These latter operations are implemented using coresets.
Finally, in Sections~\ref{sec:trees}~and~\ref{sec:region}, we extend our result for the planar version in two directions. In Section~\ref{sec:trees}, the input is a geometric tree~$T$ (rather than a curve~$P$), and queries specify, in addition to $\ell$ and $k$, a path in $T$, by providing its start and end vertices. In Section~\ref{sec:region}, queries specify a polygonal region rather than a line $\ell$, to contain the simplified curve.

\section{Facts and tools}
\label{sec:factsandtools}
\subsection{Facts}
\label{sec:facts}

For two points $p,q$ in the plane, let $d(p,q)$ be the Euclidean distance between them; many of our observations generalize to other norms, but in this version we focus on the Euclidean metric, for simplicity of presentation.

For a point set $Q$, let $\radius(Q)$ be the radius of the \emph{minimum enclosing circle} (or \emph{mec}, for short) of $Q$, and let $\radius(Q,p)$ be the radius of the mec of $Q$ centered at $p$.
Moreover, let $\radius_\ell(Q)$ be the radius of the mec of $Q$ centered at a point of line $\ell$, and
let $\radius_\ell(Q,x)$ be the radius of the mec of $Q$ centered at $x\in \ell$, as a function of $x$.
We give the proof of the following standard fact below, for completeness.

\begin{fact}
  \label{fact:convex}
  Both $\radius(Q,x)$ and $\radius_\ell(Q,x)$ are convex and have a unique minimum.
\end{fact}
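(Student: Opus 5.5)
We argue directly from the definition: for a finite point set $Q$ we have $\radius(Q,x)=\max_{q\in Q} d(x,q)$, since the smallest circle centered at $x$ that encloses $Q$ must reach the farthest point of $Q$. The plan is to establish convexity first and then uniqueness of the minimizer, handling $\radius(Q,x)$ on the plane and then restricting to a line.

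\textbf{Convexity.} For a fixed $q$, the map $x\mapsto d(x,q)=\|x-q\|$ is a norm composed with a translation, so it is convex by the triangle inequality and homogeneity. A pointwise maximum of finitely many convex functions is convex, hence $\radius(Q,x)$ is convex on $\reals^2$. Parametrize $\ell$ affinely as $x(t)=a+tb$ with $b\neq 0$. Then $\radius_\ell(Q,x(t))$ is a convex function composed with an affine map, and is therefore convex in $t$.

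\textbf{Existence of a minimum.} Both functions are continuous and tend to infinity as $\|x\|\to\infty$ (respectively $|t|\to\infty$), since $d(x,q)\ge \|x\|-\|q\|$. So a minimizer exists by restricting to a compact sublevel set.

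\textbf{Uniqueness.} This is the main step, because $d(\cdot,q)$ alone is not strictly convex along lines through $q$, so convexity by itself does not give uniqueness. Instead we use the squared function $F(x)=\max_q \|x-q\|^2$. It has the same minimizers as $\radius(Q,x)$, because squaring is strictly increasing on $[0,\infty)$. Each $\|x-q\|^2$ is strictly convex, being $\|x\|^2$ plus an affine function, and a maximum of strictly convex functions is strictly convex. A strictly convex function has at most one minimizer: if $x_1\neq x_2$ were both minimizers with value $m$, the midpoint would give $F\bigl(\tfrac{x_1+x_2}{2}\bigr)<m$, a contradiction.

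The same argument works on $\ell$. Along the line, $\|a+tb-q\|^2=\|b\|^2t^2+(\text{linear in } t)$, which is strictly convex in $t$ since $b\neq 0$. The maximum over $q$ is therefore strictly convex, and the minimizer of $\radius_\ell(Q,x)$ is unique. Together with the existence step, this shows that each function attains a unique minimum, as the Fact states.
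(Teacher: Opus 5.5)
Your proof is correct. The convexity part matches the paper's: $d(q,\cdot)$ is convex, a maximum of convex functions is convex, and restriction to $\ell$ preserves convexity.

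For uniqueness, both arguments have the same skeleton: two distinct minimizers would make an intermediate point strictly better. The justification differs.
\begin{itemize}
\item The paper argues geometrically. Take two minima $p,q\in\ell$ with common value $r$. The disks of radius $r$ centered at $p$ and $q$ both cover $Q$, so $Q$ lies in the lens $D_p\cap D_q$. The paper then asserts, without proof, that a disk centered at any point $s$ of the open segment $pq$ and covering the lens has radius smaller than $r$. The paper proves uniqueness only on $\ell$. The planar case follows implicitly by taking $\ell$ to be the line through two putative planar minima.
\item You pass to $F=\max_q\|x-q\|^2$, which has the same minimizers. You use that each $\|x-q\|^2$ is strictly convex and that a finite maximum of strictly convex functions is strictly convex. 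This handles the plane and the line directly.
\end{itemize}

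Your strict convexity is exactly what underwrites the paper's lens claim. For $s=(1-\lambda)p+\lambda q$ and $y$ in the lens,
\[
\|y-s\|^2=(1-\lambda)\|y-p\|^2+\lambda\|y-q\|^2-\lambda(1-\lambda)\|p-q\|^2\le r^2-\lambda(1-\lambda)\|p-q\|^2<r^2.
\]

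So your version is more self-contained. It also supplies the existence of a minimizer via coercivity, which the paper takes for granted. The paper's version is shorter and more visual, but it leaves that geometric step to the reader.
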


\begin{proof}
  Since $\radius(Q,x)=\max_{q\in Q} d(q,x)$, and $d(q,x)$ is a convex function of $x$, $\radius(Q,x)$ is convex.  Therefore  $\radius_\ell(Q,x)$ is also convex, as a restriction of the convex function $\radius(Q,x)$ to a line.
  
  As for uniqueness of minimum, we prove it for $\radius_\ell(Q,x)$; it also proves the claim for $\radius(Q,x)$.

  For a contradiction, suppose $p$ and $q\neq p$ are both minima of $\radius_\ell(Q,x)$. Then the disks~$D_p,D_q$ of radius $r\bydef \radius_\ell(Q,p)=\radius_\ell(Q,q)$ centered at $p$ and $q$, respectively, each cover $Q$.  Therefore $Q$ is contained in the lens $D_p\cap D_q$.  Let $s$ be a point on the open segment $pq\subset\ell$.  Then the disk centered at $s$ and just covering the lens $D_p\cap D_q$ contains~$Q$ and has a smaller radius, contradicting the assumption that $p$ and $q$ were minima of $\radius_\ell(Q,x)$.
\end{proof}

\subsection{Primitives in the plane}
\label{sec:prim}
We now list the operations that will be useful in solving the two-dimensional problems addressed in this paper. 
Each of the operations below depends on the query line $\ell$. %
\begin{description}
\item{$\disk_\ell(Q)$}: The smallest-radius disk centered at a point of $\ell$ and containing a set $Q$ of points; let $\radius_\ell(Q)$ and $\cntr_\ell(Q)$ denote its radius and center, respectively.
\item{$\feasible_\ell(Q,r)$}: The interval of $\ell$ (which may be empty) that is the locus of centers of disks of radius $r$ containing $Q$.
\item{$\fits_\ell(Q,r)$}: Is there a disk of radius $r$ centered at a point of $\ell$ and containing a set $Q$ of points?  Equivalent to $\feasible_\ell(Q,r)\neq \emptyset$ or to $\radius_\ell(Q)\leq r$; the former version is sometimes more efficient.
\item{$\prefix_\ell(S,r)$}: For a sequence $S$ of points, its longest prefix (which may be empty or all of~$S$) that fits into a disk of radius $r$ centered at a point of $\ell$.  In practice, $\prefix_\ell(S,r)$ returns the length of the prefix, so 0 if it is empty. 
\end{description}

\subsection{Data structures in the plane}
\label{sec:data}

We now describe the data structures we employ to implement the above operations. Consider a polygonal curve $P$ of length~$n>1$. For clarity of presentation we assume that $n$ is a power of two; the general case requires minimal straightforward modifications. We denote by
$P[i,j]$ the (contiguous) subcurve $(p_i,p_{i+1},\dots,p_j)$, for $1 \leq i \leq j \leq n$.

\subparagraph*{The hierarchical decomposition.}
 The \emph{hierarchical decomposition} of $P$ is a binary tree, in which the root corresponds to $P=P[1,n]$, its left and right children to $P[1,n/2]$ and $P[n/2+1,n]$, respectively, and so forth. The leaves correspond to the single-vertex curves.  Each node of the tree corresponds to a \emph{canonical subcurve} of~$P$ (we will use the term ``canonical set'' when the ordering of the vertices in the subcurve is immaterial).
A general subcurve $P[i,j]$ corresponds to the concatenation of $O(\log n)$ canonical subcurves, which can be identified from the value of the indices $i$~and~$j$ in $O(\log n)$ time.

We often build data structures based on the hierarchical decomposition of a curve, where at each node we store an additional structure tailored to its corresponding canonical subcurve.

\subparagraph*{Farthest-point Voronoi diagram as a map.}
For a given set of points $Q$ in the plane, the \emph{farthest-point Voronoi diagram} $\FVD(Q)$ is a planar map decomposing the plane into convex regions $R(p)$, one for each $p\in Q$, where $R(p)$ is the locus of all points in the plane for which $p$ is the farthest point in $Q$.  $\FVD(Q)$ is a planar map with $O(|Q|)$ vertices, edges, and faces.  Each face is a convex unbounded polygon.  We preprocess the planar map for point location, so that, given a query point~$q$, one can find the region $R(p)$ containing~$q$, and thereby the farthest point of~$q$ in~$Q$, in $O(\log|Q|)$ time.

\subparagraph*{Centroid decomposition.}
Consider a free tree $T$ of degree at most three.
A \emph{centroid edge} is an edge of~$T$ whose removal splits~$T$ into two trees of almost equal size, with between $1/3$ and $2/3$ of the vertices of $T$. It is known to exist in any degree-at-most-three tree. The existence of a centroid edge naturally induces a \emph{centroid decomposition} of $T$: the root is a centroid edge~$e$. Each subtree left and right of the root is, recursively, a centroid decomposition of the two trees formed by the removal of $e$ from $T$.  Each internal node corresponds to an edge of $T$. Each leaf corresponds to either a single edge or to two edges of $T$.

\subparagraph*{Farthest-point Voronoi diagram as a tree.}
Given a set $Q$ in the plane, the 1-skeleton of~$\FVD(Q)$ (i.e., the collection of its vertices and edges, viewed as a graph with virtual vertices at the ``ends'' of infinite ray edges) is a tree, which, for $Q$ in general position, has degree three.
For simplicity of exposition, we will assume that the points of $Q$ are in general position; refer to~\cite{query-BLR08} for how to handle degeneracies in $Q$ when constructing the centroid decomposition.

We now describe what is essentially the centroid-decomposition-based data structure from~\cite{query-BLR08} for computing the smallest disk enclosing a given set $Q$ of points, with the center on the query line~$\ell$, i.e., $\disk_\ell(Q)$. For a fixed set~$Q$, the centroid decomposition of $\FVD(Q)$ requires linear space and $O(\abs{Q}\log\abs{Q})$ time to build. In our data structure, for a path $P$, we build the centroid decomposition for all canonical subsets of~$P$, which requires $O(n\log n)$ overall space and $O(n\log^2n)$ time.

We store the centroid decomposition of $\FVD(Q)$ together with some additional geometric information, as follows: Each internal node~$x$ corresponds to an edge $e_x$ of $\FVD(Q)$, which is a line segment separating regions $R(v_x)$ and $R(w_x)$. Let $m(e_x)$ be an arbitrary point of $e_x$.  There exist two infinite rays  $\rho(v_x)$ and $\rho(w_x)$\cite{query-BLR08} emanating from $m(e_x)$, one fully contained in $R(v_x)$ and one in $R(w_x)$, respectively.  Store these two rays together with the centroid edge $e_x$ and points $v_x$ and $v_w$ at the current node~$x$ of the centroid decomposition tree.  
Notice that the polygonal line $\rho(v_x)\cup \rho(w_x)$ splits the plane into two regions: one contains the subtree corresponding to the left child of $x$ and the other --- the subtree of its right child.

\subparagraph*{The full structure $\TFVD$.}
Let $\TFVD=\TFVD(P)$ be the data structure constructed as follows: We start with the hierarchical decomposition of $P$. At each node of the decomposition, which corresponds to a canonical subcurve $P[i,j]$ of $P$, we store $\FVD(P[i,j])$ both in the planar map and centroid decomposition form.

\subsection{Implementations}
\label{sec:impl}

We now describe how to efficiently implement the primitives and analyze their running times.

\subsubsection{Implementation of \texorpdfstring{$\disk_\ell(Q)$}{disk(Q)}}
\label{sec:impl-disk}

\subparagraph*{If $Q$ is a canonical set of size $n$.}
The algorithm in \cite{query-BLR08} preprocesses $Q$ in linear space and $O(n\log n)$ time to support such an operation in  $O(\log n)$ time; see Section~\ref{sec:data} for a description of the centroid decomposition data structure.  Roughly speaking, 
the search proceeds by descending the centroid decomposition of $\FVD(Q)$, narrowing down the portion of the query line $\ell$ containing the disk center, at a constant cost per level.   At a leaf of the decomposition, at most three farthest neighbors in~$Q$ remain, which allows to identify the center and radius directly.  We use a more elaborate version below to solve the two-set version of the problem.

\subparagraph*{If $Q$ is a union of two canonical sets.}

Let $Q_1$ and $Q_2$ be two canonical sets of total size at most $n$, and let $\ell$ be the query line.  Put $Q \bydef Q_1 \cup Q_2$.  We wish to compute $\disk_\ell(Q)$, the smallest disk with center on $\ell$ enclosing $Q$; let $x^*$ denote its center, to be computed.  We will need the following information about the sets $Q_1$ and $Q_2$ (already stored in $\TFVD$): $\FVD(Q_1)$ and $\FVD(Q_2)$ viewed both as planar maps, preprocessed for point location, and as planar trees, stored as centroid decompositions, see Section~\ref{sec:data}. 

For clarity of explanation, identify $\ell$ with the $x$-axis and let $x$ denote a generic point of $\ell$.  The center of the desired disk is the point $x=x^*$ minimizing $\radius_\ell(Q,x) = \max(\radius_\ell(Q_1,x),\radius_\ell(Q_2,x))$. 
Each of the three functions are convex on $\ell$, with a unique minimum, by Fact~\ref{fact:convex}.

Note that the following \emph{sidedness test} can be performed in $O(\log n)$ time: Given a point $x \in \ell$, we can determine whether $x^*$ lies before, after, or at~$x$ along~$\ell$.  Indeed, after querying with $x$ in $\FVD(Q_1)$ and $\FVD(Q_2)$, 
we can determine the distance from $x$ to the farthest points of $Q$ (possibly more than one) and therefore the direction in which this distance decreases along~$\ell$---that's the direction towards $x^*$; if no such direction exists, then $x=x^*$.  The bottleneck of this operation is the  lookup of $x$ in the planar map $\FVD(Q_1)$ and $\FVD(Q_2)$, at a cost of $O(\log n)$.

Finally, we describe the computation of $\disk_\ell(Q)$.
Throughout the search we maintain the current segment $s \subset \ell$ that is guaranteed to contain $x^*$.  We initially set $s \bydef \ell$.

We now descend the centroid decompositions of $\FVD(Q_1)$ and $\FVD(Q_2)$.  Without loss of generality, suppose we are currently descending the decomposition of $\FVD(Q_1)$, with the current node corresponding to edge $e$ and its two rays $\rho_1,\rho_2$ splitting the plane into wedges $W_1$ and $W_2$; $s$ intersects $\rho_1\cup \rho_2$ at most twice.
At each of the intersection points we perform our sidedness test and eliminate part of $s$ as the possible position of $x^*$.  The result is an, in general, smaller updated segment $s$, contained fully in~$W_1$ or in~$W_2$ and containing $x^*$.  We descend the decomposition of $\FVD(Q_1)$ to the corresponding child of $e$.

Before terminating the current step, we check if the updated $s$ crosses $e$.  If it does, we split $s$ at the intersection point, check what side contains $x^*$ by another sidedness test, and shrink $s$ further.  Notice that the resulting segment $s$ is guaranteed not to meet any of the edges of the discarded child of $e$ in the centroid decomposition, nor $e$ itself.
Thus we inductively maintain the invariant that the current segment $s$ can only intersect the edges of $\FVD(Q_i)$ corresponding to the current subtree in its centroid decomposition.

We continue in this manner, until we reach the bottom of both hierarchies. 
At a single-edge leaf (a two-edge leaf is handled similarly) of a centroid decomposition of, say, $\FVD(Q_1)$, one edge remains, and we apply the same trick to shrink the segment $s$ to only one side of the edge, belonging to, say the Voronoi region of $q_1 \in Q_1$. 
Similarly, we narrow things down to one site $q_2 \in Q_2$. By construction, now $x^* \in s$, and $s$ intersects no edges of $\FVD(Q_1)$ nor of $\FVD(Q_2)$.  In particular, for all points on $s$, $q_1$ is the furthest point of $Q_1$ and $q_2$---of $Q_2$.  We check one of three possibilities: (a)~$q_1$ could be the only point determining $\disk_\ell(Q)$: we project $q_1$ orthogonally to $\ell$ and check if its projection~$q'_1$ lies in $s$.  If so, check that $Q$ is contained in the disk of radius $d(q_1,q'_1)$ centered at~$q'_1$ (i.e., that $d(q_2,q_1') \le d(q_1,q_1')$)---then this is $\disk_\ell(Q)$. (b) We check if $q_2$ is the only point defining $\disk_\ell(Q)$ similarly.  Finally, (c) we construct the intersection of the bisector $b(q_1,q_2)$ with $s$ (or, equivalently, with $\ell$).  This point~$q$ is the center of $\disk_\ell(Q)$ and its radius is $d(q,q_1)=d(q,q_2)$.  It is easily checked that the processing at the bottom of the two hierarchies takes $O(1)$ work once the identity of $q_1$ and $q_2$ is known.

Since the descent of the two logarithmic-height hierarchies takes $O(\log n)$ time for one step (the bottleneck in the sidedness test is consulting a point with known position in one of the FVDs for its furthest point in the other set: in each sidedness test we already know where we are in one diagram, but not in the other), we reach the bottom and the solution in $O(\log^2 n)$ time.  We conclude that

\begin{lemma}
  \label{lem:union_of_2}
  The operation $\disk_\ell(Q_1 \cup Q_2)$, where $Q_1, Q_2$ are two canonical sets of total size $n$, can be performed in $O(\log^2 n)$ time. 
\end{lemma}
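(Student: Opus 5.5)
The plan is to turn the search described just before the statement into a proof with two parts: correctness of the invariant, and the time bound. We identify $\ell$ with the $x$-axis and define $f(x)=\radius_\ell(Q,x)=\max(\radius_\ell(Q_1,x),\radius_\ell(Q_2,x))$. By Fact~\ref{fact:convex}, $f$ is convex on $\ell$ with a unique minimizer $x^*$. This is what justifies the sidedness test. At a point $x$, two point-location queries, one in $\FVD(Q_1)$ and one in $\FVD(Q_2)$, return the farthest points of $Q$ from $x$. From these we read off the one-sided directional derivatives of $f$ at $x$, each being a maximum over the farthest points. Convexity then says that $x^*$ lies on the side where $f$ decreases, or that $x=x^*$ if neither side decreases. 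Each test therefore costs $O(\log n)$ and keeps the invariant that the current segment $s\subset\ell$ contains $x^*$.

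Next comes the structural invariant. At every stage, $s$ meets only edges of $\FVD(Q_i)$ that belong to the current subtree of the centroid decomposition, for $i=1,2$. Take the current node with edge $e$ and rays $\rho(v),\rho(w)$. The polyline $\rho(v)\cup\rho(w)$ is a convex-wedge boundary, so a segment crosses it at most twice. After at most two sidedness tests, the remaining $s$ lies in one of the two wedges. Crucially, the rays lie inside the Voronoi regions $R(v)$ and $R(w)$, so they cross no FVD edges. Hence the wedge containing $s$ contains exactly the edges of one child subtree, plus possibly $e$ itself. One more test at the point $s\cap e$ removes $e$. By induction, when both descents reach leaves, $s$ meets no edges of either diagram. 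Therefore single sites $q_1\in Q_1$ and $q_2\in Q_2$ are the farthest points on all of $s$.

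On such an $s$ we have $f(x)=\max(d(x,q_1),d(x,q_2))$, and the minimizer over $s$ is one of three candidates: the projection of $q_1$ onto $\ell$, the projection of $q_2$, or the point $b(q_1,q_2)\cap\ell$. We check the candidates in $O(1)$ time, as in cases (a)--(c). Since $x^*\in s$ and $f$ agrees with this two-point function on $s$, the candidate found is the true $\disk_\ell(Q)$.

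For the running time, each centroid decomposition has depth $O(\log n)$ because each level shrinks the tree by a factor of at least $2/3$. We spend $O(1)$ sidedness tests per level, at $O(\log n)$ each, which gives $O(\log^2 n)$ in total. The step needing the most care is the claim that the wedge structure excludes the discarded subtree entirely, including degenerate intersections of $s$ with a ray, or $s$ passing through a Voronoi vertex. I would handle these by treating the intersection points as closed, testing each endpoint, and relying on the general-position assumption and~\cite{query-BLR08} for degeneracies.
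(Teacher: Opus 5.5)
Your proposal is correct and follows the paper's argument essentially step for step. The shared ingredients are the convexity-based sidedness test using point location in both farthest-point Voronoi diagrams, and the simultaneous descent of the two centroid decompositions with at most two ray tests plus one edge test per level, which maintains the invariant that $s$ meets only edges of the current subtrees. Both arguments then finish in constant time via cases (a)--(c) once single sites $q_1,q_2$ remain, and bound the total by $O(\log n)$ levels times $O(\log n)$ per test, deferring degeneracies to~\cite{query-BLR08}.
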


\subparagraph*{If $Q$ is a union of $m \ge 2$ canonical sets.}

\begin{lemma}
\label{lem:union_of_m}
  Let $Q_1, \ldots, Q_m$ be $m\ge2$ canonical subsets of $P$ of total size at most $n$, let their union be $Q$, and let $\ell$ be a line.   We can compute $\disk_\ell(Q)$ in $O(m^2 \log^2 n)$ time.
\end{lemma}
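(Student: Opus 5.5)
The plan is to reduce to the two-set case of Lemma~\ref{lem:union_of_2} by working pairwise. For each pair $1\le a<b\le m$ we compute $D_{ab}\bydef\disk_\ell(Q_a\cup Q_b)$, using Lemma~\ref{lem:union_of_2} when $a\neq b$. A singleton set is handled by the single-set query of~\cite{query-BLR08}, or by taking $Q_a$ with itself. There are $O(m^2)$ such calls, each costing $O(\log^2 n)$, so this phase takes $O(m^2\log^2 n)$ time in total.

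The key claim is that $\radius_\ell(Q)=\max_{a,b}\radius_\ell(Q_a\cup Q_b)$, and that the center of $\disk_\ell(Q)$ is the center of a maximizing pair. I would prove this via Helly's theorem on the line. For a fixed $r$, each $F_a\bydef\feasible_\ell(Q_a,r)$ is an interval on $\ell$, since it is the intersection of $\ell$ with an intersection of disks. Moreover, $\feasible_\ell(Q,r)=\bigcap_a F_a$.

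By one-dimensional Helly, a family of intervals has nonempty intersection iff every pair does. Hence $\radius_\ell(Q)\le r$ iff $\radius_\ell(Q_a\cup Q_b)\le r$ for all $a,b$. Taking $r$ to be the maximum $r^*$ of the pairwise radii gives $\radius_\ell(Q)\le r^*$. The reverse inequality $\radius_\ell(Q)\ge r^*$ is trivial by monotonicity of $\radius_\ell$ under inclusion.

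Let $(a,b)$ be a pair attaining the maximum. Its disk $D_{ab}$ has radius equal to $\radius_\ell(Q)$. The disk $\disk_\ell(Q)$ also contains $Q_a\cup Q_b$ and has that same radius. By the uniqueness in Fact~\ref{fact:convex} applied to $\radius_\ell(Q_a\cup Q_b,x)$, the two centers coincide. Therefore $\disk_\ell(Q)=D_{ab}$, and we output it after a linear scan over the $O(m^2)$ candidates.

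The main subtlety is the Helly step together with the uniqueness argument. In particular, one must check that equality of radii with a maximizing pair indeed forces equality of the disks, which is exactly what Fact~\ref{fact:convex} provides. An alternative I would keep in reserve is a direct generalization of the simultaneous centroid descent to $m$ diagrams. There each sidedness test costs $O(m\log n)$ and there are $O(m\log n)$ descent steps, which also yields $O(m^2\log^2 n)$.
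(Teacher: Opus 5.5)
Your proof is correct. The algorithm is exactly the paper's: compute $\disk_\ell(Q_a\cup Q_b)$ for all $\binom{m}{2}$ pairs via Lemma~\ref{lem:union_of_2} and return the largest. What differs is the correctness argument.

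\textbf{The paper's route.} It cites the external fact \cite{constrained-cirle-RoyKDN09} that $\disk_\ell(Q)$ is determined by at most two points of $Q$. The pair of canonical sets containing those points therefore reproduces $\disk_\ell(Q)$; if both points lie in one set $Q_k$, any pair containing $Q_k$ does. Since every pairwise radius is at most $\radius_\ell(Q)$, that disk is the largest candidate.

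\textbf{Your route.} You apply one-dimensional Helly to the closed intervals $\feasible_\ell(Q_a,r)$ and obtain $\radius_\ell(Q)=\max_{a<b}\radius_\ell(Q_a\cup Q_b)$ directly. You then use the uniqueness in Fact~\ref{fact:convex} to show that the disk of \emph{any} maximizing pair coincides with $\disk_\ell(Q)$.

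\textbf{Comparison.} The two key facts are the same phenomenon seen from two sides: Helly number $2$ on a line versus a basis of size at most $2$. Your version has some advantages:
\begin{itemize}
\item It is self-contained.
\item It reuses only the interval structure of $\feasible_\ell$ that the paper already exploits in Section~\ref{sec:feasible}.
\item It spells out the tie-handling that the paper compresses into a brief appeal to Fact~\ref{fact:convex}.
\item It extends directly to centers on a $g$-flat, with $(g+1)$-tuples in place of pairs.
\end{itemize}
The paper's version is shorter, given the citation.

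Two minor points. Your remark about handling a singleton set is unnecessary, since $m\ge 2$ means every pair you use has $a\neq b$. The fallback of a simultaneous descent through all $m$ centroid decompositions is plausible but not needed.
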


\begin{proof}
  For each pair $(i,j)$, with $1 \le i < j \le m$, we compute $\disk_\ell(Q_i\cup Q_j)$ by an application of the two-set procedure above and return the largest disk.
  The runtime bound is immediate, so we focus on correctness.

  Since $\disk_\ell(Q)$ is determined by at most two points of~$Q$ (refer to \cite[Observation~1]{constrained-cirle-RoyKDN09}), it follows that the desired disk is indeed $\disk_\ell(Q_k \cup Q_l)$, if one of the defining points comes from $Q_k$ and the other from $Q_l$ with $l\neq k$; if the two points come from the same set $Q_k$, then for every $l\neq k$, $\disk_\ell(Q_k\cup Q_l)=\disk_\ell(Q)$.  If $m=2$, we are done.  Otherwise, notice that only the largest of the disks $\disk_\ell(Q_i\cup Q_j)$, $i < j$, can be $\disk_\ell(Q)$, as any smaller disk would not enclose $Q_k \cup Q_l$, by definition of $\disk_\ell(Q_k\cup Q_l)$ and Fact~\ref{fact:convex}.
\end{proof}

\begin{remark}
\label{rem:aux_result}
    Lemma~\ref{lem:union_of_m} is interesting in its own right, since it allows us to solve problems such as the following. Let $S$ be a set of $n$ points in the plane. Construct a data structure of near-linear size to support, in $O(\polylog n)$ time, queries of the form: Given an axis-aligned rectangle $R$ and a line $\ell$, return the smallest enclosing circle of $S \cap R$ with center on $\ell$.    
\end{remark}

\subsubsection{Implementation of \texorpdfstring{$\feasible_\ell(Q,r)$}{feasible(Q,r)}}
\label{sec:feasible}

\subparagraph*{If $Q$ is a single canonical set.}
  Let $s\bydef \feasible_\ell(Q,r)$ be the desired locus of candidate centers $x\in \ell$ such that $\radius_\ell(Q,x) \leq r$; it is a contiguous interval on $\ell$, if non-empty, since $\radius_\ell(Q,x)$ is convex; see Fact~\ref{fact:convex}. This interval is empty if $r < \radius_\ell(Q)$, consists of a single point if $r = \radius_\ell(Q)$, and is a positive-length interval with two endpoints if $r > \radius_\ell(Q)$. 
  Since $\radius_\ell(Q,x)$ is convex and monotone along each of the two half-lines into which $l$ is split by~$\cntr_\ell(Q)$, 
  we can search along each half-line to determine the endpoints of $s$. Using the same data structure as in \cite{query-BLR08} allows us to conduct the search in $O(\log n)$ time. In particular, we descend the centroid decomposition to find a point on each half-line where the radius function switches from being smaller than~$r$ to larger than~$r$. 

\subparagraph*{If $Q$ is the union of $m$ canonical sets $Q_1,\dots,Q_m$.}
Compute $s_i\bydef\feasible_\ell(Q_i,r)$, for $i=1,\dots,m$, and return $\bigcap s_i$.  
Running time is dominated by the cost of computing the $m$ feasible intervals, so it is $O(m\log n)$.

\subparagraph*{If $Q$ is the union of a dynamic collection of canonical sets.}
The goal is to maintain the feasibility interval subject to addition and removal of canonical sets. 
For each set, store its feasible interval and, using two heaps $H_L$ and $H_R$, maintain the rightmost left endpoint and leftmost right endpoint of the intervals.
(i) On insertion of a new set $Q'$, compute $[a_i,b_i]\bydef\feasible_\ell(Q',r)$, add $a_i$ to $H_L$, and $b_i$ to~$H_R$.
(ii) On deletion of $Q'$, remove the endpoints of the stored interval $\feasible_\ell(Q',r)$ from $H_L$ and $H_R$.
(iii) The current feasible interval is always $[\max(H_L),\min(H_R)]$, unless $\max(H_L)>\min(H_R)$, in which case it is empty.

If we use a standard binary heap to implement $H_L$ and $H_R$, insertion costs $O(\log n + \log m)=O(\log n)$, deletion $O(\log m)$, and current feasible region is available in time $O(1)$, where $m$ is the current number of sets in~$Q$ and $n$ is the total number of points~involved.

\subsubsection{Implementation of \texorpdfstring{$\fits_\ell(Q,r)$}{fits?(Q,r)}}
This test is logically equivalent to $\feasible_\ell(Q,r)\neq \emptyset$ or to $\radius_\ell(Q)\leq r$.  The former is more efficient unless $Q$ is a single canonical set (in which case the latter is more straightforward, but equally efficient asymptotically).

\subsubsection{Implementation of \texorpdfstring{$\prefix_\ell(S,r)$}{prefix{S,r}}}
\label{sec:impl-prefix}
This operation can be implemented as a binary search along~$S$, using $\fits(S',r)$ as a blackbox for candidate prefixes~$S'$ as the decision procedure, at the cost of a multiplicative logarithmic factor in running time.
However, sometimes we can do better, as detailed below.

\subparagraph*{If $S$ is a canonical subcurve.}
At any moment during a binary search in the hierarchical decomposition of~$S$ (which itself is a canonical subcurve of~$P$), there are $O(\log n)$ canonical sets $S_1,\dots, S_i$ comprising the current prefix being tested. As binary search progresses down the hierarchy, either a new canonical set $S_{i+1}$ is added to the current prefix (if the prefix is too short), or $S_i$ is replaced by a new, smaller $S_i'$ (if the prefix is too long).
We use the dynamic structure for feasible interval maintenance from Section~\ref{sec:feasible} to keep track of this information and check if the current feasible interval is empty. Since the total number of insertions into and deletions
from our collection of canonical sets is $O(\log n)$,
the overall cost is $O(\log^2 n)$.

\subparagraph*{If $S$ is a subcurve of $P$.} 
 Using the hierarchical decomposition of $P$, we express $S$ as a concatenation $S_1\cdot\ldots\cdot S_m$ of $m=O(\log n)$ canonical subcurves $S_1,\dots,S_{m}$.  In $O(m \log n)$ time, we compute the feasible intervals $s_1,...,s_{m}\subset \ell$ for all subcurves $S_i$. 
 Then in time $O(m)$, we compute the largest index $j$ such that $S_1\cdot \ldots \cdot S_j$ fits into a disk of radius $r$, but $S_1\cdot \ldots \cdot S_{j+1}$ does not, by computing the largest $j\le m$ such that $\bigcap_{1\leq i\leq j} s_i \neq \emptyset$ (if $j=m$, $\prefix_\ell(S,r)$ returns all of $S$). We then binary search within~$S_{j+1}$ to identify the longest prefix of $S_{j+1}$ that can be concatenated to $S_i\cdot \ldots \cdot S_j$ and still fit into a disk of radius $r$ in $O(\log^2 n)$ time, using a slight modification of the above single-canonical-subcurve algorithm. The overall time complexity is $O(m \log n + m + \log^2 n)=O(\log^2 n)$.

\section{Discrete Fr\'echet distance simplification on a line}
\label{sec:path}

In this section we present an algorithm for answering queries, which refer to the already preprocessed input curve $P=(p_1,\ldots,p_n)$, of the following type:
Given a line $\ell$ and an integer $k \ge 1$ find an
at-most-$k$-vertex curve $Q$ on $\ell$ minimizing $\dF(P, Q)$; that is, find a curve $Q =(q_1,\ldots,q_{k'})$, where $k' \le k$ and $q_1,\ldots,q_{k'} \in \ell$, that realizes the expression
\begin{equation*}
\min\limits_{\substack{k'\in [1,k] \\ q_1,\ldots,q_{k'}\in \ell}}
    \{\dF(P,(q_1,\ldots,q_{k'}))\}.
\end{equation*}

We assume that $k<n$, since otherwise the curve $Q=(\overline{p}_1,\ldots,\overline{p}_n)$ clearly achieves the minimum, where $\overline{p}_i$ is the orthogonal projection of $p_i$ on $\ell$.

Recall that we write $P[i,j]$, for $1 \le i \le j \le n$, to denote the (contiguous) subcurve $(p_i, \ldots, p_j)$ of $P$. For a point $q$ in the plane, the distance from $q$ to the vertex of $P[i,j]$ farthest from (nearest to) it, is denoted by $d_{\max}(P[i,j], q)$ ($d_{\min}(P[i,j], q)$). Notice that
\begin{equation*}
\min\limits_{q\in\ell}\{\dF(P[i,j],(q))\} = 
\min\limits_{q\in\ell}\{d_{\max}(P[i,j], q)\} = 
\radius_\ell(P[i,j]),
\end{equation*} 
where $\radius_\ell(P[i,j])$ is the radius of $\disk_\ell(P[i,j])$ (see Section~\ref{sec:prim}).

Let $Q=(q_1,\ldots,q_{k'})$ be the desired curve and let $W$ be a walk of $P$ and $Q$ of cost $\dF(P,Q)$. We may assume that 
$W$ does not match two consecutive points in $Q$ to the same point in $P$. That is, $W$ does not contain two consecutive pairs of the form $(p_i,q_j),(p_i,q_{j+1})$, since, if it does, we may delete the pair $(p_i,q_{j+1})$ from $W$ (and the point $q_{j+1}$ from $Q$, if it was only matched to $p_i$) to obtain a legal walk $W'$ (and a curve $Q'$) of the same cost.

This assumption implies that, in order to compute $\dF(P,Q)$, we need to find an optimal partition of $P$ into $k'$ subcurves. That is, 
\begin{equation*} 
        \dF(P,(q_1,\ldots,q_{k'}))=
        \min\limits_{1\le i_1< i_2<\dots<i_{k'-1}<n}
        \max\left\{
        \begin{gathered}
        d_{\max}(P[1,i_1],q_1)\\
        \vdots\\
        d_{\max}(P[i_{k'-2}+1,i_{k'-1}],q_{k'-1})\\ d_{\max}(P[i_{k'-1}+1,n],q_{k'})
        \end{gathered}
        \right\}.
\end{equation*}
We conclude that, given $\ell$ and $k$, our task is to find $k' \le k$ and a sequence of indices $1\le i_1< i_2<\dots<i_{k'-1}<n$ that minimize the expression  
\begin{equation*}
\max\{
\radius_\ell(P[1,i_1]),\ldots,\radius_\ell(P[i_{k'-2}+1,i_{k'-1}]),\radius_\ell(P[i_{k'-1}+1,n])\}.
\end{equation*}
The desired sequence is then $Q=(\cntr_\ell(P[1,i_1]),\ldots,\cntr_\ell(P[i_{k'-1}+1,n]))$, where $\cntr_\ell(P[i,j])$ is the center of $\disk_\ell(P[i,j])$.

\subsection{The Algorithm}
Given the supporting data structures and set of primitive operations, the query algorithm is pretty simple. We first present an algorithm for the decision problem, i.e., given $\ell$, $k$, and $r$, determine if there exists an at-most-$k$-vertex curve $Q$ on $\ell$ such that $\dF(P, Q) \le r$.

\subparagraph*{The decision algorithm.} 
The function \textsc{Decision} in Algorithm~\ref{alg:dec-k-points} uses a simple greedy recursive approach for solving the decision problem.

\begin{algorithm}[h]
\begin{algorithmic}
    \Function{Decision}{$i,\ell,k,r$}
        \If{$i=n+1$}
            \State return \textsc{true}
        \EndIf 
        \If{$k=0$}
            \State return \textsc{false}
        \EndIf
        \State $l\gets \prefix_\ell(P[i,n],r)$
        \If{$l=0$}  \Comment{$p_i$ is too far from $\ell$}
            \State return \textsc{false}
        \EndIf
        \State return \Call{Decision}{$i+l,\ell,k-1,r$}
    \EndFunction
    \end{algorithmic}
\caption{The decision algorithm: Given a distance $r > 0$, determine if there exists a sequence of at most $k$ points on line $\ell$ at discrete Fr\'echet distance at most $r$ from $P[i,n]$.}
\label{alg:dec-k-points}
\end{algorithm}

\begin{lemma}
Let $\ell$ be a line in the plane, $k \ge 1$ an integer, and $r > 0$.
The call $\textsc
{\textup{Decision}}(1,\ell,k,r)$ returns \textsc{true} if and only if
there exist $k'\le k$ points on $\ell$, $Q=(q_1,\ldots,q_{k'})$, such that $\dF(P,Q)\le r$.
The running time is $k'$ times the cost of a call to $\prefix_\ell$ with a subcurve of~$P$, i.e., $O(k\log^2 n)$.
\end{lemma}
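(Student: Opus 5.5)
We will prove correctness by a greedy exchange argument, using the reduction established just before the algorithm: an at-most-$k$-vertex curve $Q$ on $\ell$ with $\dF(P,Q)\le r$ exists if and only if $P$ can be partitioned into $k'\le k$ consecutive subcurves $P[1,i_1],P[i_1+1,i_2],\dots,P[i_{k'-1}+1,n]$, each satisfying $\radius_\ell(\cdot)\le r$. Equivalently, each piece must pass $\fits_\ell(\cdot,r)$. Given such a partition, the centers $\cntr_\ell$ of the pieces give $Q$. Conversely, we normalize a walk as described above, so that no vertex of $P$ is matched to two consecutive vertices of $Q$, and this yields such a partition.

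For soundness, we observe that if $\textsc{Decision}(1,\ell,k,r)$ returns \textsc{true}, then it has produced at most $k$ consecutive blocks $P[i,i+l-1]$ covering $P$. Each block is a prefix returned by $\prefix_\ell$ and therefore fits in a disk of radius $r$ centered on $\ell$. This is exactly a valid partition.

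For completeness, we argue by induction on $k$, with a stronger hypothesis: for every $i$, if $P[i,n]$ admits a valid partition into at most $k$ pieces, then $\textsc{Decision}(i,\ell,k,r)$ returns \textsc{true}. The base cases are $i=n+1$, and $k=0$ with $i\le n$, where no partition exists. For the inductive step, take an optimal partition of $P[i,n]$ whose first piece is $P[i,j]$.
\begin{itemize}
\item The first piece fits, so $l=\prefix_\ell(P[i,n],r)\ge j-i+1\ge 1$, and the algorithm does not return \textsc{false} at the $l=0$ test.
\item The key monotonicity fact is that subsets of a fitting set fit, since $\radius_\ell$ is monotone under inclusion. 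So if $i+l\le n$, the suffix $P[i+l,n]$ admits a partition into at most $k-1$ pieces: drop the first piece, and intersect the remaining pieces with $P[i+l,n]$, discarding any that become empty.
\item The induction hypothesis then applies to the recursive call. If $i+l=n+1$, the call returns \textsc{true} immediately.
\end{itemize}
This exchange step, showing that greedy maximal prefixes never hurt, is the only real content. The point to be careful about is that truncated pieces remain contiguous and still fit.

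For the running time, each recursive call either terminates or makes exactly one $\prefix_\ell$ call on a suffix $P[i,n]$. The number of such calls is the number of pieces used, which is at most $k'\le k$, plus $O(1)$ overhead. By Section~\ref{sec:impl-prefix}, a $\prefix_\ell$ call on a subcurve of $P$ costs $O(\log^2 n)$, giving $O(k\log^2 n)$ in total.
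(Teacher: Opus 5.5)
Your proof is correct and follows the same route as the paper. Both argue by induction on $k$: the greedy prefix returned by $\prefix_\ell$ is at least as long as the first block of any feasible solution, so the remaining suffix can still be covered with at most $k-1$ points. The difference is only in phrasing. You work with partitions and monotonicity of $\radius_\ell$ under inclusion, which makes the truncation step explicit, whereas the paper works directly with the walk $W$ and the last vertex $p_{l'}$ matched to $q_1$.
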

\begin{proof}
The time bound is obvious. Moreover, it is clear that, if the call $\textsc{Decision}(1,\ell,k,r)$ returns \textsc{true}, then
there exist $k' \le k$ such points---namely, the centers of the disks corresponding to the computed prefixes.   
We now prove the other direction by induction on~$k$.

If there exists a point $q_1$ on $\ell$ such that $\dF(P,(q_1)) \le r$, then the call $\prefix_\ell(P[1,n],r)$ at the top level of the recursion will return $n$ and the algorithm will return \textsc{true} just at the beginning of the next level (since $i=n+1$).

Assume now that for any preprocessed curve $P'$, if there exist fewer than $k$ points on~$\ell$, $Q'=(q'_1,\ldots,q'_{k-1})$, such that $\dF(P',Q') \le r$, then the call $\textsc{Decision}(1,\ell,k-1,r)$ (applied to $P'$) returns \textsc{true}. Moreover, assume there exist $k$ points on $\ell$, $Q=(q_1,\ldots,q_k)$, such that $\dF(P,Q) \le r$. We need to show that the call $\textsc{Decision}(1,\ell,k,r)$ returns \textsc{true}. Indeed, consider a walk $W$ of $P$ and $Q$ of cost $\dF(P,Q)$, and let $l'$ be the largest index such that $W$ matches $q_1$ to $p_{l'}$. By definition, the index $l$ returned by the call $\prefix_\ell(P[1,n],r)$ at the top level of the recursion is at least as large as $l'$. Now, since the cost of $W$ is $\dF(P,Q)$ and $W$ matches $q_1$ to the prefix $P[1,l']$, we know that $\dF(P[l'+1,n],(q_2,\ldots,q_k)) \le r$ and therefore there exists a sequence $Q'$ of at most $k-1$ points on $\ell$, which is $(q_2,\ldots,q_k)$ or a suffix of it, such that $\dF(P[l+1,n],Q') \le r$. Therefore, by the induction hypothesis, the call $\textsc{Decision}(l+i,\ell,k-1,r)$ at the top level of the recursion (where $i=1$) returns \textsc{true}, and thus $\textsc{Decision}(1,\ell,k,r)$ returns \textsc{true}.
\end{proof}

\begin{note*}
  A minor modification of the \textsc{Decision} function, with the same preprocessing of~$P$, can solve the problem of finding, given line~$\ell$ and distance~$r$, the shortest sequence of points~$Q$ on~$\ell$ lying within discrete Fr\'echet distance $r$ of $P$. (No such set exists if $r$ is smaller than the maximum distance from a point of $P$ to $\ell$; this can be efficiently checked with no asymptotic slowdown.)  The query can be answered in time $O(k \log^2 n)$, where $n$ is the length of $P$ and $k$ is the length of the shortest such curve $Q$.  Performing the same query on a subcurve $P[i,j]$ of $P$ can be done at the same asymptotic cost. 
\end{note*}

\subparagraph*{The optimization algorithm.} 

We now present the query algorithm. That is, given $\ell$ and $k$, find an at-most-$k$-vertex curve $Q$ on $\ell$ minimizing $\dF(P,Q)$. The function \textsc{Optimization} in Algorithm~\ref{alg:opt-k-points} actually returns the distance $\dF(P,Q)$, but it can be easily modified (within the same bounds) to return $Q$ as well.  

\begin{algorithm}[h]
\begin{algorithmic}
    \Function{Optimization}{$i,\ell,k$}
   \If{$k=0$} \Return $\infty$ \EndIf
        \State Search for the smallest index $j$, $i \le j \le n$, such that  
        \State \qquad \Call{Decision}{$i,\ell,k,\radius_\ell(P[i,j])$}
        returns \textsc{true} 
       \State $d_1 \gets \radius_\ell(P[i,j])$
        \If{$j>i$} 
            \State  $d_2 \gets$ \Call{Optimization}{$j,\ell,k-1$}    
        \Else
            \State $d_2 \gets \infty$
        \EndIf
        \State \Return $\min\{d_1,d_2\}$
\EndFunction
\end{algorithmic}
\caption{The optimization algorithm: Find the smallest distance $r$, for which there exists a sequence $Q$ of at most $k$ points on $\ell$ such that $\dF(P[i,n],Q) \le r$.}
\label{alg:opt-k-points}
\end{algorithm}

\begin{lemma}
\label{lem:opt-k-points-works}
Let $\ell$ be a line in the plane, and let $k \ge 1$. The call \Call{Optimization}{$i,\ell,k$} returns the smallest $r$, for which there exists a sequence $Q$ of at most $k$ points on $\ell$ such that $\dF(P,Q) \le r$.
The running time is $O(k^2\log^3 n + k\log^4 n)$. 
\end{lemma}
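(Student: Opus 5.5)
The proof has two parts. First, correctness, by induction on $k$ (and on $n-i$), using monotonicity in both $r$ and $j$. Second, a running-time analysis in which the search for $j$ walks down the hierarchical decomposition while maintaining pairwise two-set disks incrementally.

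\textbf{Facts used.} Write $r^*(i,k)$ for the optimum value for $P[i,n]$ with at most $k$ points on $\ell$.
\begin{itemize}
\item \textsc{Decision}$(i,\ell,k,r)$ is monotone in $r$: if it holds for $r$, it holds for every $r'\ge r$.
\item $\radius_\ell(P[i,j])$ is nondecreasing in $j$, since the sets are nested.
\item $r^*(j,k-1)\le r^*(j',k-1)$ whenever $j\ge j'$: a walk for a longer suffix restricts to a walk for a shorter suffix of no greater cost.
\end{itemize}
By the first two facts, the predicate ``\textsc{Decision}$(i,\ell,k,\radius_\ell(P[i,j]))$ is true'' is monotone in $j$. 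It holds at $j=n$, because $r^*(i,k)\le\radius_\ell(P[i,n])$. So the smallest such $j$ is well defined and can be found by binary search.

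\textbf{Correctness.}
\begin{itemize}
\item \emph{Both candidates are feasible.} $d_1$ is achievable because \textsc{Decision} returned true at $d_1$. For $d_2$ (when $j>i$), the minimality of $j$ says \textsc{Decision} fails at $\rho\bydef\radius_\ell(P[i,j-1])$. Hence $d_2\ge\rho$: otherwise one disk for $P[i,j-1]$ plus the $k-1$ optimal points for $P[j,n]$ would cover $P[i,n]$ at cost $\max(\rho,d_2)=\rho$, a contradiction. So covering $P[i,j-1]$ by one point and $P[j,n]$ by $k-1$ points achieves $\max(\rho,d_2)=d_2$.
\item \emph{Consequently $\min\{d_1,d_2\}\ge r^*(i,k)$.}
\item \emph{The reverse inequality.} Take an optimal partition whose first piece is $P[i,j_0]$.
 \begin{itemize}
 \item If $\radius_\ell(P[i,j_0])=r^*$, then \textsc{Decision} succeeds at $\radius_\ell(P[i,j_0])$. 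So $j\le j_0$, and therefore $d_1\le\radius_\ell(P[i,j_0])=r^*$.
 \item Otherwise $\radius_\ell(P[i,j_0])<r^*$, so \textsc{Decision} fails there and $j_0<j$. The suffix $P[j,n]\subseteq P[j_0+1,n]$ is then covered by $k-1$ points at cost $r^*$. By induction and the third fact, $d_2=r^*(j,k-1)\le r^*$.
 \end{itemize}
\item \emph{Base cases.} $k=0$ returns $\infty$, which is correct. If $j=i$, then $d_1=\radius_\ell(\{p_i\})$ is a trivial lower bound on $r^*$, so $d_1=r^*$.
\end{itemize}

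\textbf{Running time.} There are at most $k$ recursion levels, so it suffices to bound one search for $j$ by $O(k\log^3 n+\log^4 n)$.
\begin{itemize}
\item Perform the binary search over $j$ as a descent of the hierarchical decomposition, exactly as in the canonical-subcurve implementation of $\prefix_\ell$. The current candidate $P[i,j]$ is a concatenation of $O(\log n)$ canonical sets, and each step inserts one canonical set or replaces the last one by a child.
\item Compute $\radius_\ell(P[i,j])$ as in the proof of Lemma~\ref{lem:union_of_m}: it is the maximum of $\radius_\ell(Q_a\cup Q_b)$ over pairs of current canonical sets. Keep these pairwise values in a heap.
\item An insertion or replacement changes only $O(\log n)$ pairs. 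Each pair costs $O(\log^2 n)$ by Lemma~\ref{lem:union_of_2}, so each step costs $O(\log^3 n)$.
\item Over the $O(\log n)$ steps of the descent this totals $O(\log^4 n)$. (Recomputing from scratch with Lemma~\ref{lem:union_of_m} each time would instead give $O(\log^5 n)$.)
\item Each step also makes one call to \textsc{Decision}, costing $O(k\log^2 n)$, for $O(k\log^3 n)$ over the descent.
\item Multiplying by the $k$ levels gives $O(k^2\log^3 n+k\log^4 n)$.
\end{itemize}

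\textbf{Main obstacle.} The delicate part is the binary-search step for $j$, on two counts. One must make sure the pairwise-maximum bookkeeping remains valid under both kinds of update (insertion and replacement). One must also handle the few $O(\log n)$ canonical pieces at the initial split of $P[i,n]$ without losing a logarithmic factor; I would start by treating them like the first-level pieces in the $\prefix_\ell$ implementation.
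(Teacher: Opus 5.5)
Your proposal is correct and follows the paper's proof essentially step for step. Correctness uses the same induction on $k$ with the same two ingredients: minimality of $j$ forces \textsc{Decision} to fail at $\radius_\ell(P[i,j-1])$, and suffix monotonicity bounds $d_2$. The time bound comes from the same improvement, recomputing only the $O(\log n)$ pairwise disks that involve the newly added or replaced canonical subcurve. The initial-split issue you flag is handled in the paper just as your insertion steps suggest: add the top-level canonical pieces of $P[i,n]$ one at a time to locate the piece containing $p_j$, then binary-search inside it, which keeps each recursion level at $O(\log n)$ steps.
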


\begin{proof}
If $k=1$, then the algorithm returns $\min\limits_{q_1\in\ell}\dF(P[i,n],(q_1))=\radius_\ell(P[i,n])$. Indeed, let $j$ be the smallest index such that \Call{Decision}{$i,\ell,1,\radius_\ell(P[i,j])$} returns \textsc{true}, so the algorithm sets $d_1\gets\radius_\ell(P[i,j]) \le \radius_\ell(P[i,n])$. But since $k=1$, the latter inequality must be an equality, so $d_1 = \radius_\ell(P[i,n])$. As for $d_2$, it is set to $\infty$ either by the recursive call with $k=0$ (if $j > i$) or by the {\bf else} clause (if $j=i$). Hence, the algorithm returns $d_1 = \min\limits_{q_1\in\ell}\dF(P[i,n],(q_1))$ as claimed.

Assume now that $k>1$ and that the algorithm returns the correct value for $k-1$. Let
\[
  d\bydef\min\limits_{\substack{k'\in [1,k] \\ q_1,\ldots,q_{k'}\in \ell }}\dF(P[i,n],(q_1,\ldots,q_{k'})),
\] 
and let $Q=(q_1,\dots,q_{k'})$ be a sequence of $k'$ points on $\ell$ such that $\dF(P[i,n],Q)=d$. Finally, let $W$ be a minimum-cost walk of $P$ and $Q$ (i.e., a walk of cost $d$). 
Let $P[i,j]$ be the shortest prefix for which \Call{Decision}{$i,\ell,k,\radius_\ell(P[i,j])$} returns \textsc{true}. Then $d_1 = \radius_\ell(P[i,j])$ and by definition $d_1 \ge d$. Notice that the inductive assumption implies that the value $d_2$ returned by the recursive call (if such a call is needed) is correct. We consider the cases $j=i$ and $j > i$ separately.

If $j=i$, $\radius_\ell((p_i))$ is feasible. This radius is equal to the distance from $p_i$ to $\ell$.  Notice that, for each point $p$ of $P[i,n]$, the distance between $p$ and $\ell$ is a lower bound on $d=\radius_\ell(P[i,n])$, so we must have $d_1=d$. Moreover, since $j \not>i$, the algorithm sets $d_2$ to $\infty$ and therefore returns the correct value.

Assume therefore that $j>i$ and let $d'\bydef \radius_\ell(P[i,j-1])$. We know that the call \Call{Decision}{$i,\ell,k,\radius_\ell(P[i,j-1])$} returned \textsc{false}, so $d'< d \le d_1$. 
We distinguish between two subcases.
\begin{itemize}
    \item If $d_1=d$, we claim that the value $d_2$ returned by the recursive call must be greater or equal than $d_1$, and therefore the algorithm returns the correct value, $d_1$. Indeed, if $d_2<d_1$, then we found $k'\le k$ points $q'_1,\dots, q'_{k'}$, where $q'_1 = \cntr_\ell(P[i,j-1])$ and $q'_2,\ldots,q'_{k'}$ are the centers induced by the recursive call, such that $\dF(P[i,n],(q'_1,\ldots,q'_{k'}))\le\max\{d',d_2\}<d_1$, contradicting the assumption that $d_1=d=\min\limits_{\substack{k'\in [1,k] \\ q_1,\ldots,q_{k'}\in \ell }}\dF(P[i,n],(q_1,\ldots,q_{k'}))$.
    \item If $d_1>d$, then let $P[i,j']$ be the prefix assigned to $q_1$ by the minimum-cost walk $W$. Notice that $j' \le j-1$, since otherwise $d$ would be greater or equal than $d_1$.
    So $d$ (which as we recall is greater than $d'$) is the smallest distance between $P[j'+1,n]$ and a curve on $\ell$ of length at most $k-1$, while $d_2$ is the smallest distance of either $P[j'+1,n]$ (if $j'=j-1$) or a proper suffix of $P[j'+1,n]$ (if $j' < j-1$) and such a curve on $\ell$. Thus, $d \ge d_2$. But, since $\max\{d',d_2\}=d_2$ is feasible, we conclude that $d=d_2$ and the algorithm returns the correct value (since $d_2 = d < d_1$).
\end{itemize}

As for the running time, at each level of the recursion we perform a binary search, which entails $O(\log n)$ calls to the decision algorithm, each preceded by a call to $\disk_\ell(P[i,j])$, with the appropriate index $j$, to obtain $\radius_\ell(P[i,j])$. Since the running time of the decision algorithm is $O(k \log^2 n)$ and the running time of $\disk_\ell(P[i,j])$ is $O(\log^4 n)$, and the number of levels is $O(k)$, we conclude that the total running time is $O(k^2 \log^3 n + k\log^5 n)$. 

This bound can be slightly improved by splitting the search into two stages: First find the canonical subcurve $S_t$ that contains the point $p_j$ that we are looking for, by adding subcurves one by one.  Then, look for $p_j$ within $S_t$ by binary search using the tree of the canonical subcurve~$S_t$. The search consists of $\log n$ steps, but in each step the set of canonical subcurves to which we need to apply $\disk_\ell$ either grows by one, or we replace the last subcurve with a shorter one. Hence, $\disk_\ell$ does not need to recompute the smallest disk for all $O(m^2)$ pairs of subcurves, but only for the $O(m)$ pairs involving the new curve.
This effectively speeds up $\disk_\ell$ by a logarithmic factor, resulting in total query time of $O(k^2 \log^3 n + k\log^4 n)$.
\end{proof}

The following theorem summarizes our main result.
\begin{theorem}
  \label{thm:path}
  Let $P$ be a polygonal curve with $n$ vertices in the plane. One can preprocess $P$ in time $O(n \log^2 n)$ into a data structure of size $O(n\log n)$, so that given a line $\ell$ and a positive integer $k$, an at-most-$k$-vertex curve on $\ell$ minimizing $\dF(P,Q)$ (and the corresponding distance) can be found in $O(k^2 \log^3 n + k\log^4 n)$ time. 
\end{theorem}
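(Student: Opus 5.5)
Theorem~\ref{thm:path} is essentially an assembly of the results already established, so I will prove it in three steps: preprocessing cost, space, and query correctness and time.

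\emph{Preprocessing and space.} The data structure is $\TFVD(P)$ from Section~\ref{sec:data}. Its skeleton is the hierarchical decomposition of $P$, which has $O(\log n)$ levels. The canonical subcurves at each level partition $P$, so at each level their total size is $n$. At each node storing $P[i,j]$, we:
\begin{itemize}
\item compute $\FVD(P[i,j])$ in $O(m\log m)$ time, where $m=j-i+1$;
\item preprocess it for point location in $O(m\log m)$ time and $O(m)$ space;
\item build its centroid decomposition, together with the stored rays and edges, in $O(m\log m)$ time and $O(m)$ space, as in~\cite{query-BLR08}.
\end{itemize}
Summing over one level gives $O(n\log n)$ time and $O(n)$ space. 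Summing over all levels gives $O(n\log^2 n)$ preprocessing time and $O(n\log n)$ space. Degeneracies are handled as in~\cite{query-BLR08} and do not change these bounds.

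\emph{Reduction of the query.} At the start of Section~\ref{sec:path} we showed that we may assume no two consecutive vertices of $Q$ are matched to the same vertex of $P$. Under that assumption, an optimal at-most-$k$-vertex curve on $\ell$ corresponds to a partition of $P$ into $k'\le k$ contiguous subcurves minimizing the maximum of $\radius_\ell$ over the pieces. The curve itself is the sequence of the centers $\cntr_\ell$ of these pieces.

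\emph{Query correctness and time.} Lemma~\ref{lem:opt-k-points-works} states that \Call{Optimization}{$1,\ell,k$} returns exactly this minimum distance, in $O(k^2\log^3 n + k\log^4 n)$ time. The remaining point is to output $Q$ as well, not just the distance, within the same bound. The plan is:
\begin{enumerate}
\item Record the optimal value $d^*$ returned by \Call{Optimization}{$1,\ell,k$}.
\item Run \textsc{Decision} once with $r=d^*$ in $O(k\log^2 n)$ time; its correctness lemma guarantees it succeeds. Record the greedy prefix boundaries $i_1<\dots<i_{k'-1}$ it produces.
\item For each piece, output the center of its disk. It suffices to take any point of $\feasible_\ell(P[i_t+1,i_{t+1}],d^*)$, which is nonempty by construction of the prefixes. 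Each piece is a union of $O(\log n)$ canonical sets, so this takes $O(\log^2 n)$ time per piece, or alternatively $\cntr_\ell$ via Lemma~\ref{lem:union_of_m}.
\item The resulting sequence lies on $\ell$, has at most $k$ vertices, and has discrete Fr\'echet distance at most $d^*$ from $P$, so it is optimal.
\end{enumerate}
The total extra cost is $O(k\log^2 n)$, which is dominated by the query bound.

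The only delicate point is confirming that Lemma~\ref{lem:opt-k-points-works} uses exactly the structures in $\TFVD$ and nothing more. It relies on $\disk_\ell$ over unions of canonical sets (Lemmas~\ref{lem:union_of_2} and~\ref{lem:union_of_m}) and on $\prefix_\ell$ through the dynamic feasible-interval heaps. The heaps are built per query and use only $O(\log n)$ extra space, so no additional preprocessing is needed.
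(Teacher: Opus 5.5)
Your proposal is correct and follows the paper's own argument. The theorem is just the $O(n\log^2 n)$-time, $O(n\log n)$-space construction of $\TFVD$ from Section~\ref{sec:data} combined with Lemma~\ref{lem:opt-k-points-works}, and your recovery of $Q$ (one extra \textsc{Decision} call at $r=d^*$, then any point of each piece's feasible interval, for $O(k\log^2 n)$ extra time) is a valid concrete instance of the paper's remark that \textsc{Optimization} can easily be modified within the same bounds to return $Q$.
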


\subparagraph*{A generalization.}

Observe that the logic of both the \textsc{Decision} and \textsc{Optimization} algorithms applies in any metric space, provided the following operation is available: let $S$ be a set of points, in our case always a subcurve $P[i,j]$ of the given curve $P$, and let $X$ be a subset of the space (analogous to line $\ell$ in the above description).  Define $\ball_X(S)$ to be the smallest ball enclosing $S$ with center in~$X$ and $\radius_X(S)$ to be its radius.  Let $D(n)$ be the worst-case runtime of computing this ball, where $n$ is the length of~$P$.

Given an implementation of $\radius_X$, we can compute $\prefix_X(P,r)$ (defined analogously) as follows: We find the index $i$ so that $\radius_X(P[1,i])\leq r$ and $\radius_X(P[1,i+1])>r$, by binary search on $i$.  If $\radius_X(P)\leq r$, we return $i=n$.  The running time is therefore $O(D(n)\log n)$. 
We now use $\radius_X$ and $\prefix_X$ to implement the analog of \textsc{Decision} in time $O(k D(n) \log n)$ by running $\prefix_X$ at most $k$ times. 

Finally, we implement \textsc{Optimization} as in Algorithm~\ref{alg:opt-k-points}, that is, by using \textsc{Decision} in a binary search manner at each level of the recursion, for a total of $O(k\log n)$ times.  
We also need to invoke $\radius_X$, but the calls to \textsc{Decision} dominate the cost.  We conclude with a generalization of Theorem~\ref{thm:path}, where again the computation of the actual curve $Q$ is an easy modification:
\begin{obs}
  \label{obs:path-generic}
  Let $P$ be a polygonal curve with $n$ vertices in a metric space. Provided $P$ can be preprocessed to support $D(n)$-time $\radius_X$ queries as above, an at-most-$k$-vertex curve~$Q$ on~$X$ minimizing $\dF(P,Q)$ and the corresponding distance can be found in $O(k^2 D(n) \log^2 n)$ time. Moreover, the underlying decision algorithm can be implemented in time $O(k D(n) \log n)$. 

\end{obs}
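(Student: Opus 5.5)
The plan is to re-run the correctness arguments for \textsc{Decision} and \textsc{Optimization} with $\ell$ replaced by $X$, observing that they only ever used the values $\radius_\ell(P[i,j])$ and monotonicity, never planar geometry. Two structural facts are needed.

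\emph{Partition reformulation.} The first is that the reduction of $\dF(P,Q)$ to an optimal partition of $P$ into consecutive subcurves carries over verbatim. Removing duplicate matchings of one $p_i$ to consecutive $q_j$'s is purely combinatorial, and $\min_{q\in X} d_{\max}(P[i,j],q)=\radius_X(P[i,j])$ holds by definition of $\ball_X$.

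\emph{Monotonicity.} The second is that $\radius_X(P[i,j])$ is nondecreasing as $j$ grows and as $i$ shrinks, because it is a minimum over $q\in X$ of a maximum over a growing set. This replaces Fact~\ref{fact:convex}. Convexity and uniqueness were only needed for the planar data structures, not for the greedy logic. Monotonicity makes the predicate ``$\radius_X(P[1,i])\le r$'' monotone in $i$, so the binary search for $\prefix_X$ is valid. It costs $O(\log n)$ calls, i.e.\ $O(D(n)\log n)$.

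\emph{Decision.} I would repeat the induction from the lemma on \textsc{Decision} word for word. The greedy prefix is at least as long as the prefix matched to $q_1$ by an optimal walk. By monotonicity, the remaining suffix is a suffix of the one the walk handles, and the walk's restriction (or a suffix of $(q_2,\dots,q_k)$) certifies feasibility for it. This uses at most $k$ calls to $\prefix_X$, giving $O(kD(n)\log n)$.

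\emph{Optimization.} I would replay the proof of Lemma~\ref{lem:opt-k-points-works}, case by case, in the same order:
\begin{itemize}
\item the base case $k=1$;
\item the case $j=i$, where $\radius_X(\{p_i\})$ is a lower bound on the optimum, again by monotonicity;
\item the two subcases $d_1=d$ and $d_1>d$.
\end{itemize}
None of these uses anything beyond monotonicity and the existence of centers $\cntr_X$ realizing the radii. The search for the smallest $j$ is a binary search, which is valid since $\radius_X(P[i,j])$ is monotone in $j$ and \textsc{Decision} is monotone in $r$. Each step makes one $\radius_X$ call and one \textsc{Decision} call, so a level costs $O(\log n\cdot(D(n)+kD(n)\log n))$. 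Over $k$ levels this totals $O(k^2D(n)\log^2n)$. Recovering $Q$ just records the centers.

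\emph{Expected obstacle.} The main subtlety is the $j=i$ case. In the plane the paper argues via the distance from $p_i$ to $\ell$, and that must be replaced by $\radius_X(\{p_i\})$ together with monotonicity. A second point is the existence of a minimizer $\cntr_X$. Here I will take as a hypothesis that $\ball_X$ is attained, which the statement implicitly assumes by supposing $\radius_X$ is computable. If it is not attained, the same argument works with infima, up to returning near-optimal centers.
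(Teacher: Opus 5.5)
Your proposal is correct and follows the paper's own argument. You implement $\prefix_X$ by binary search using $O(\log n)$ calls to $\radius_X$, run \textsc{Decision} with at most $k$ prefix calls for $O(kD(n)\log n)$, and run \textsc{Optimization} with $O(k\log n)$ \textsc{Decision} calls over $k$ levels for $O(k^2D(n)\log^2 n)$. Your explicit appeals to monotonicity of $\radius_X$ under inclusion (in place of Fact~\ref{fact:convex}) and to attainment of the minimizing center are precisely the assumptions the paper's brief sketch uses implicitly.
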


\section{Extension to geometric trees}
\label{sec:trees}

A geometric tree is a tree whose vertices are points in the plane and whose edges are line segments connecting the corresponding points. In this section, we expand the data structure to support preprocessing a geometric tree~$T$ so that, given two vertices $u,v$ of the tree, an arbitrary line $\ell$, and an integer $k>0$, one can find an at-most-$k$-vertex curve $Q$ on $\ell$ that minimizes the discrete Fr\'echet distance to the path $\Pi_{uv}$ from $u$ to $v$ in $T$.
We use ideas similar to those in~\cite{oracle-SoCG}, which in turn use heavy-path decomposition of~$T$~\cite{heavy}.

More specifically, the heavy-path decomposition of $T$ is a collection $\{\pi_i\}$ of vertex-disjoint paths in $T$ which collectively cover all the vertices of $T$.  Let $n$ be the number of vertices in~$T$.  Any path $\Pi_{uv}$ in~$T$ is a concatenation of $O(\log n)$ subpaths of (some of) the paths $\pi_i$ and,  this information can be extracted from the heavy-path decomposition data structure in time $O(\log^2 n)$ \cite{heavy} (in fact, a faster implementation is possible, but would not help us, as it is not the bottleneck here).
We then construct the data structure $\TFVD(\pi_i)$, for each curve~$\pi_i$, as above.
Since the paths~$\pi_i$ are disjoint, the total size of the data structures involved is~$O(n\log n)$.

Given a decomposition of $\Pi_{uv}$ as above, we simply run Algorithm~\ref{alg:opt-k-points} (which, in turn, invokes Algorithm~\ref{alg:dec-k-points}), on a concatenation of $l=O(\log n)$ subpaths $P_1,\dots,P_l$ of preprocessed heavy paths.  \textsc{Decision} algorithm requires an implementation of $\prefix_\ell$ for such a concatenation.  Since we can partition the subpaths further, each into $O(\log n)$ canonical subpaths, we are effectively running the $\prefix_\ell$ algorithm for a subpath, as in Section~\ref{sec:impl-prefix}, but with $m=l\cdot O(\log n)= O(\log^2 n)$ canonical subpaths, resulting in running time of $O(\log^3 n)$ for $\prefix_\ell$ and $O(k \log^3 n)$ for \textsc{Decision}.

As for \textsc{Optimization}, conceptually we run Algorithm~\ref{alg:opt-k-points} with minor modifications to speed it up.  There are at most $k$ levels of recursion.  We will describe and analyze one level and multiply the runtime by $k$.

At a fixed level of recursion, we need to find the shortest prefix of the concatenation of $O(\log n)$ subpaths $P_1,P_2,\dots,P_l$ that satisfies the \textsc{Decision} condition.  We first find $j$ so that $P_1\cdot\ldots\cdot P_j$ satisfies it, but $P_1\cdot\ldots\cdot P_{j-1}$ does not (or $j=1$, in which case we are done), adding one subpath at a time, sequentially.  This takes at most $l$ invocations of \textsc{Decision}, for a total of $O(k\log^4 n)$ time.  Now we subdivide $P_{j}$ into a logarithmic number of canonical subcurves and apply the same logic to find the canonical subcurve $P'$ that contains the (still unknown) point~$p_j$ we want.  This involves $O(\log n)$ more invocations of \textsc{Decision}, for a total of $O(k\log^4 n)$ time. 

However, we did not account for the work of the $\radius_\ell$ algorithm.  It is easily checked that during these first two steps, over all the calls to $\radius_\ell$, it is enough to compute the minimum enclosing radius with center on $\ell$ for at most ${m \choose 2}=O(\log^4 n)$ pairs of canonical subpaths constituting the subpaths $P_1,\dots,P_l$, at a cost of $O(\log^2 n)$ per pair; so the total cost of all the $\radius_\ell$ runs at the current level of recursion up to now is at most $O(\log^6 n)$.

Finally, we use the hierarchical decomposition of $P'$ to find desired point $p_j$, essentially as in the proof of Theorem~\ref{thm:path}.  We observe that descending the hierarchy in binary search, we always either add the last canonical subpath or replace the last canonical subpath, so the number of pairs of subpaths that need to be recomputed by the $\radius_\ell$ algorithm on each descent step is only $O(m)=O(\log^2n)$  (rather than $O(m^2)$) and therefore each $\radius_\ell$ call takes $O(m\log^2 n)=O(\log^4 n)$ time, for a total of $O(\log^5n)$ over the entire binary search.  

To summarize, each level of recursion in this implementation of \textsc{Optimization} takes $O(k\log^4 n + \log^6 n)$ time.
We conclude that that the full running time of \textsc{Optimization} is $O(k^2\log^4 n + k\log^6 n)$, which finishes the proof of the following theorem:

\begin{theorem}
  \label{thm:tree}
  Let $T$ be a geometric tree on $n$ vertices. 
  One can preprocess $T$ in time $O(n \log^2 n)$ into a data structure of size $O(n\log n)$, so that given vertices $u$ and $v$ in $T$, a line $\ell$, and a positive integer $k$, the at-most-$k$-vertex curve $Q$ on $\ell$ minimizing the discrete Fr\'echet distance between $Q$ and the path $\Pi_{uv}$ between $u$ and $v$ in $T$, can be found in $O(k^2\log^4 n + k\log^6 n)$ time. 
\end{theorem}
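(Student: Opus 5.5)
Theorem~\ref{thm:tree} has three parts, which I will establish separately: the space bound, the preprocessing time, and correctness together with the query time. The analysis preceding the statement already sketches the query procedure, so the proof mainly assembles those pieces and checks the accounting.

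\emph{Preprocessing and space.} First compute the heavy-path decomposition $\{\pi_i\}$ of $T$ in linear time~\cite{heavy}, together with the auxiliary structure that expresses any $\Pi_{uv}$ as $O(\log n)$ subpaths of heavy paths. Each $\pi_i$ is a polygonal curve; for it, build $\TFVD(\pi_i)$. This means building the hierarchical decomposition and, at every canonical subcurve, the farthest-point Voronoi diagram, both as a point-location structure and as a centroid decomposition. By Section~\ref{sec:data} this costs $O(|\pi_i|\log|\pi_i|)$ space and $O(|\pi_i|\log^2|\pi_i|)$ time. Since the $\pi_i$ are vertex-disjoint, $\sum|\pi_i|=n$, and summing gives $O(n\log n)$ space and $O(n\log^2 n)$ time.

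\emph{Correctness.} The path $\Pi_{uv}$, in its traversal order from $u$ to $v$, is a polygonal curve. It is the concatenation of $l=O(\log n)$ subpaths $P_1,\dots,P_l$, each a contiguous piece of some $\pi_i$; a piece may be traversed in reverse, but canonical sets are unordered, so only index bookkeeping changes. Refining each piece gives $m=O(\log^2 n)$ canonical sets. Correctness of \textsc{Decision} and \textsc{Optimization} in Lemma~\ref{lem:opt-k-points-works} used only the primitives $\prefix_\ell$ and $\radius_\ell$ on subcurves. Both remain valid for unions of canonical sets drawn from different heavy paths, because Lemma~\ref{lem:union_of_m} and the feasible-interval machinery of Section~\ref{sec:feasible} never use the fact that the sets come from a single curve. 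So the returned value equals the optimum for $\Pi_{uv}$.

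\emph{Query time.} The bound follows the accounting already given.
\begin{itemize}
\item $\prefix_\ell$ on a suffix of $\Pi_{uv}$ takes $O(m\log n+\log^2n)=O(\log^3 n)$, so \textsc{Decision} takes $O(k\log^3 n)$.
\item Each recursion level of \textsc{Optimization} locates the first index $j$ in three stages: a sequential scan over the $P_t$, a scan over the canonical pieces of the selected $P_t$, and a binary search inside the selected canonical piece. This makes $O(\log n)$ \textsc{Decision} calls, costing $O(k\log^4 n)$.
\item The $\radius_\ell$ work at each level costs $O(m^2\log^2n)=O(\log^6n)$ for the pairwise disks computed during the first two stages. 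During the binary search, only the $O(m)$ pairs involving the newly changed canonical set are recomputed. The main point to check here is that cached pairwise disks remain valid when the last set is replaced, since pairs not involving it are unchanged. Each step then costs $O(\log^4 n)$, for $O(\log^5 n)$ in total.
\end{itemize}
With $k$ levels, the total is $O(k^2\log^4n+k\log^6n)$, plus the $O(\log^2n)$ path extraction. The curve $Q$ is recovered by recording the centers $\cntr_\ell$ of the chosen pieces.
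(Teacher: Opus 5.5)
Your proposal is correct and follows the paper's proof essentially step by step. Both build a $\TFVD$ for each heavy path, split the query path into $m=O(\log^2 n)$ canonical pieces, and at each recursion level run the three-stage search, with $O(\log^6 n)$ pairwise-disk work followed by $O(\log^5 n)$ incremental work in the final binary search. Your remark that a heavy-path piece traversed in reverse only changes index bookkeeping is a small point the paper leaves implicit.
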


\section{Extension to polygonal regions in the plane}
\label{sec:region}

Let $R$ be a (topologically closed) polygonal region in the plane bounded by $m$ edges and $P$~be an $n$-vertex curve. Given an integer $k>0$, we show how to find an at-most-$k$-vertex curve $Q\subseteq R$ that minimizes the discrete Fr\'echet distance $d_{dF}(P,Q)$. That is, we find a curve $Q =(q_1,\ldots,q_{k'})$, with $k' \le k$ and $q_1,\ldots,q_{k'} \in R$, that realizes the expression
\begin{equation*}
\min\limits_{\substack{k'\in [1,k] \\ q_1,\ldots,q_{k'}\in R}}
    \{\dF(P,(q_1,\ldots,q_{k'}))\}.
\end{equation*}
Note that $Q$ need not be unique.

Observe that we only need to implement one new primitive, which we refer to as $\disk_R(S)$, that computes the smallest-radius disk centered at a point of $R$ for a subcurve $S$ of $P$ (and its corresponding $\radius$ and $\cntr$). Using $\disk_R(S)$ one can perform binary search over~$P$ to implement $\prefix_R(S,r)$ at the cost of a logarithmic factor in the running time. The \textsc{decision} and \textsc{optimization} functions (Algorithms~\ref{alg:dec-k-points} and~\ref{alg:opt-k-points}) will use the new primitive and otherwise remain the same.

We now sketch how to implement $\disk_R(S)$ for a subcurve $S=P[i,j]$. Let $c$ be the center of the mec containing the subcurve $P[i,j]$ and centered at a point of $R$.  

First we find the center of the unconstrained mec of $P[i,j]$, using the data structure in \cite{eppstein1992dynamic} for the mec of a planar point set. We partition $P[i,j]$ into $t=O(\log n)$ canonical subcurves (as in Section~\ref{sec:data}) and preprocess each subcurve as in \cite{eppstein1992dynamic}.  Using the algorithm from Section~3 of \cite{eppstein1992dynamic}, we can compute the mec of the union of the subcurves, that is, of $P[i,j]$ in deterministic time $O(t^3 \log^3n) =O(\log^6 n)$ (refer to the discussion after Lemma~2 in that paper).  Alternatively, Theorem~2 in \cite{eppstein1992dynamic} provides an algorithm for finding the mec in expected time $O(t \log t \log n + \sqrt{t} \log t \log^3 n)=O(\log^{3.5}n \log\log n)$.\footnote{%
  The statements are phrased in terms of solving an LP over an intersection of polyhedra in $\reals^3$, but a later discussion points out that the same logic, with minor modifications, applies to finding the mec of a discrete point set in the plane as well \cite{eppstein1992dynamic}.}
If the center of the resulting mec lies in $R$, we are done: we found $\disk_R(S)$.

Otherwise, since $d_{\max}(P[i,j],x)$ is a convex function of $x$, the center~$c$ lies on the boundary of~$R$.  For each bounding segment $s$ of $R$, we compute $\disk_\ell(P[i,j])$ for the supporting line~$\ell$ of~$s$ in time $O(t^2\log^2n)=O(\log^4 n)$; refer to Lemma~\ref{lem:union_of_m}.  Once again, if the center $c(s)\coloneqq\cntr_\ell(P[i,j])$ lies in~$s$, this gives a candidate mec, otherwise the endpoint of $s$ closest to $c(s)$ provides such a candidate.  After repeating the process for each boundary edge of~$R$, we return (the center of) the smallest disk found.  The total cost is thus $O(m\log^4 n)$.

To summarize, $\disk_{R}(P[i,j])$ can be computed in deterministic time $O(\log^6n+m\log^4 n)$ or in expected time $O(\log^{3.5}n \log\log n+m\log^4 n)=O(m\log^4 n)$, assuming $m\neq0$.

We now implement $\prefix_R$ by binary search over the prefix length using $\disk_R(S)$ in expected time $O(m\log^5 n)$.
Expected running time of \textsc{decision} will be $O(km\log^5 n)$ and that of \textsc{optimization} will be $O(k^2m\log^6 n)$ (see the time analysis of Lemma~\ref{lem:opt-k-points-works} for details). 

We did not optimize the logarithmic factors.

\begin{theorem}
\label{thm:region}
  Let $P$ be a polygonal curve with $n$ vertices in the plane. One can preprocess $P$ in time $O(n \log^2 n)$ into a data structure of size $O(n\log n)$, so that given a polygonal domain~$R$ bounded by $m$ segments, an at-most-$k$-vertex curve $Q\subseteq R$ that minimizes the discrete Fr\'echet distance $d_{dF}(P,Q)$ and the corresponding distance can be found in expected time $O(k^2m\log^6 n)$.  Alternatively, it can be found in deterministic time $O(k^2 m \log^6 n + k^2 \log^8 n)$. 
\end{theorem}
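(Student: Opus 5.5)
The plan is to reduce Theorem~\ref{thm:region} to the generic framework of Observation~\ref{obs:path-generic}, taking the set $X$ to be the region $R$. The only new ingredient is the primitive $\disk_R(P[i,j])$. The preprocessing then consists of $\TFVD(P)$ together with, at every node of the hierarchical decomposition, the Eppstein structure for the minimum enclosing circle of the node's canonical subcurve. Since each node stores a linear-size structure built in $O(|S|\log|S|)$ time, the total space is $O(n\log n)$ and the preprocessing time is $O(n\log^2 n)$, matching the statement.

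The first step is to prove correctness of $\disk_R$. The function $x\mapsto d_{\max}(P[i,j],x)=\radius(P[i,j],x)$ is convex with a unique minimizer, the unconstrained mec center $c_0$, by Fact~\ref{fact:convex}. If $c_0\in R$, it is the answer. Otherwise, take any minimizer $c$ over $R$. If $c$ were interior to $R$, it would be a local minimum of a convex function and hence global, so $c=c_0\in R$, a contradiction. Therefore $c\in\partial R$, so $c$ lies on some bounding segment $s$ with supporting line $\ell$.

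Restricted to $\ell$, the function is convex with minimizer $\cntr_\ell(P[i,j])$, again by Fact~\ref{fact:convex}. Its minimum over $s$ is therefore attained at that center if the center lies in $s$, and otherwise at the endpoint of $s$ nearer to it. Taking the best candidate over all $m$ segments yields $c$. Computing the unconstrained mec costs $O(\log^6 n)$ deterministically, or $O(\log^{3.5}n\log\log n)$ expected, via \cite{eppstein1992dynamic} on $O(\log n)$ canonical pieces. Each line query costs $O(\log^4 n)$ by Lemma~\ref{lem:union_of_m} with $O(\log n)$ canonical sets. So $D(n)=O(m\log^4 n)$ expected, or $O(m\log^4 n+\log^6 n)$ deterministic.

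Next, I would verify that the \textsc{Decision} and \textsc{Optimization} correctness proofs go through unchanged. They only use two facts: the optimal single-vertex distance to a subcurve equals $\radius_X$, and shrinking or extending prefixes behaves monotonically. Both hold for an arbitrary closed center set $X$. This is exactly the content of Observation~\ref{obs:path-generic}.

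The main obstacle is the bookkeeping that turns Observation~\ref{obs:path-generic}'s bound $O(k^2D(n)\log^2 n)$ into the stated one. Naively this gives $O(k^2m\log^6 n)$ expected. The deterministic bound gives $O(k^2(m\log^4 n+\log^6 n)\log^2 n)$, which is $O(k^2m\log^6 n+k^2\log^8 n)$, matching the theorem. So the stated bounds follow directly by substituting $D(n)$. The one thing to check carefully is that the expected-time Eppstein bound composes linearly over the $O(k^2\log^2 n)$ calls, which holds by linearity of expectation.
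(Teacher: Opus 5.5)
Your proposal is correct and follows the paper's argument essentially step for step. You implement $\disk_R$ by first computing the unconstrained minimum enclosing circle with Eppstein's structure on the $O(\log n)$ canonical pieces; if its center lies outside $R$, you take the best of the $m$ per-segment candidates from $\disk_\ell$ (Lemma~\ref{lem:union_of_m}), clipped to the segment, and plug $D(n)=O(m\log^4 n)$ expected, or $O(m\log^4 n+\log^6 n)$ deterministic, into the generic \textsc{Decision}/\textsc{Optimization} framework of Observation~\ref{obs:path-generic}, exactly as the paper does (your convexity argument for why the constrained optimum lies on $\partial R$ just spells out the paper's one-line remark).
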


\section{Discrete Fr\'echet distance simplification on a \texorpdfstring{$\dm$}{g}-flat}
\label{sec:approx}

In this section we address the general version of the dimension-reduction problem mentioned in the introduction. That is, $P$ is a polygonal curve in $\reals^d$, for some constant $d > 2$.  We need to preprocess $P$ into a compact data structure to efficiently support queries that specify a $\dm$-flat $h$ ($1 \le \dm \le d-1$) and an integer $k \ge 1$, and seek a curve $Q$ on~$h$ of length at most $k$ that minimizes the discrete Fr{\'e}chet distance to $P$. 

It is unlikely that there exists a solution to this problem with bounds similar to those that we obtained in Section~\ref{sec:path}, since it seems that, on the one hand, a primitive analogous to $\disk_\ell(Q)$, where `disk' is replaced by `ball' and $\ell$ is replaced by $h$, is essential, but on the other hand, it is probably impossible to implement such a primitive efficiently in higher dimensions, i.e., in polylogarithmic time after near-linear time preprocessing of $Q$. 

In passing, we mention that (while it is difficult or maybe even impossible to implement $\ball_h(P[i,j])$ efficiently in the $L_2$ metric) it \emph{is} realizable in the $L_\infty$ metric as, based on Observation~\ref{obs:path-generic}, we have:

\begin{note*}
  For the $L_\infty$ interpoint distance in $\reals^d$, $D(n)$ is $O_d(\log n)$, with a constant depending on the dimension~$d$, since one can compute the bounding box of $P[i,j]$ in logarithmic time and the bounding box together with $h$ determine the smallest enclosing $L_\infty$-ball with a center on~$h$; the running time of this final part of $\ball_h(P[i,j])$ depends only on~$d$.
\end{note*}

Therefore, since we are interested in a solution to the general dimension-reduction problem with near-linear time preprocessing and polylogarithmic time query, we resort to approximate queries. Specifically, we present such a solution (for a prespecified parameter $0 < \eps_0 < 1$) that given $h$ returns a curve $Q$ on~$h$ of length at most $k$ that minimizes the discrete Fr{\'e}chet distance to $P$ up to a factor of $1+\eps_0$ (i.e., 
$\dF(P,Q) \le (1+\eps_0)\dF(P,Q^*)$, where $Q^*$ is such a curve that minimizes the discrete Fr\'echet distance to $P$).

Set $\eps \coloneqq \frac{\eps_0}{1+\eps_0}$; then, $0 \le \eps \le 1/2$ and
$\frac{1}{1-\eps}=1+\eps_0$. 
We first define the primitive $\ball_h^\eps(Q)$, which returns a $(1-\eps)$-approximation of the smallest ball centered at a point of $h$ and containing a set $Q$ of points. That is, let $\radius_h^\eps(Q)$ denote the radius of the returned ball, then $(1-\eps)\radius_h(Q) \le \radius_h^\eps(Q) \le \radius_h(Q)$, where $\radius_h(Q)$ is the radius of the smallest ball with center on~$h$.

We use \emph{coresets} to implement $\ball_h^\eps(Q)$, assuming $Q$ is the set of points corresponding to a subsequence of $P$. More precisely, we preprocess $P$ so that given a subsequence $P[i,j]$, for $1 \le i \le j \le n$, one can compute $\ball_h^\eps(P[i,j])$ in $O(c_\eps \log n)$ time, where $c_\eps = c_\eps(\eps)$ is a constant, see below. Before presenting the details,
we provide a brief overview of the coreset-related definitions and results that we use.

\subparagraph*{Coresets.}
Let $S$ be a compact set in $\reals^d$.  Let $\theta\in \sphere^{d-1}$ be a direction.  The \emph{directional width} $\width(S,\theta)$ of $S$ in direction $\theta$ is the minimum distance between a pair of hyperplanes with normal $\theta$ containing $S$ between them.  Fix a number $\eps$, $0<\eps<1$.  A subset $C$ of $S$ is an \emph{$\eps$-coreset} for $S$ \emph{with respect to directional width}, if it has the property that $\width(C,\theta)\ge (1-\eps)\width(S,\theta)$, for all directions $\theta$; since $C\subset S$, $\width(C,\theta)\le \width(S,\theta)$.  We say that a subset $C$ of $S$ is an \emph{$\eps$-coreset} for $S$ for \emph{farthest neighbors} if, for any point $q$, $d_{\max}(C,q) \ge (1-\eps)d_{\max}(S,q)$. It follows that if $B$ is any ball containing $C$, then the ball $\frac{B}{1-\eps}$, obtained by scaling $B$ around its center by a factor of $1/(1-\eps)$, contains~$S$.    

The following facts are known (refer for example to~\cite[Section~23.1]{sariel-book}):
\begin{itemize}
\item A compact set in $\reals^d$ has an $\eps$-coreset for directional width of size $c_\eps\coloneqq O(1/\eps^{(d-1)/2})$.
\item Such an $\eps$-coreset for a set of $n$ points can be constructed in time $O(n+c_\eps^3)$. %
\item An $(\eps/4)$-coreset for directional width is an $\eps$-coreset for farthest neighbors.
\item If $C_1,C_2,\dots,C_t$ are $\eps$-coresets for $S_1,S_2,\dots,S_t$ for width or farthest neighbors, then $\bigcup C_i$ is an $\eps$-coreset for $\bigcup S_i$ for width or farthest neighbors.
\end{itemize}

As we are interested in coresets for farthest neighbors, with a slight abuse of terminology, we will use unqualified ``coreset'' to denote them hereafter.

\subparagraph*{Smallest ball with a center on a flat.}
We recall the classical randomized LP-type algorithm of Welzl~\cite{emo-mec} for finding the smallest enclosing ball of an $n$-point set $S$ in $\reals^d$ in $O_d(n)$ expected time (with the implied constant depending on $d$).  An inspection of the algorithm reveals that it can be used  essentially unmodified to compute the smallest enclosing ball of $S$ whose center is constrained to lie in a given $g$-flat; such a ball will be defined by at most $g+1$ points.

\subparagraph*{Implementation of $\ball_h^\eps(P[i,j])$.}
We now explain how to implement $\ball_h^\eps(P[i,j])$, which is the basic building block of the approximation algorithm described above.  As in Section~\ref{sec:data} we once again construct a hierarchical decomposition of $P$, storing $P$ of length $n$, then $P[1,n/2]$ and $P[n/2+1,n]$ as its children and so forth, thereby creating a hierarchy of canonical subcurves.  With each canonical subcurve $P[i,j]$ we associate its coreset $C_{ij}$ for farthest neighbors, of size at most $c_\eps$.  An arbitrary subcurve $P'$ of $P$ is the union of $O(\log n)$ canonical subcurves. If we collect the points of the corresponding coresets, we obtain a coreset $C'$ for $P'$ of size $c_\eps \log n$.  Running the exact minimum-enclosing-ball with center on a $g$-flat algorithm on the set $C'$, by the above discussion, gives an $\eps$-approximation of corresponding ball for $P'$. Indeed, we obtain the smallest possible ball $B$ with the center constrained to the flat and enclosing~$C'$.  By the coreset property $\frac{B}{1-\eps}$ contains~$P'$.  There cannot be a substantially smaller ball with center on the flat and containing~$P'$, as such a ball would also contain~$C'$ and would contradict minimality of $B$.

The expected running time is dominated by that of the ball-finding routine, which is $O_d(c_\eps \log n)=O_d(\log n/\eps^{(d-1)/2})$, concluding our description of an implementation of $\ball_h^\eps$.

\subparagraph*{The decision algorithm.}
We now wish to devise a decision algorithm, similar to the one in Section~\ref{sec:path}. Given $r > 0$, our decision algorithm will return \textsc{true} if there exists a sequence $Q$ of at most $k$ points on~$h$ such that $\dF(P,Q) \le \frac{r}{1-\eps}$. It will return \textsc{false} otherwise, in which case we may only conclude that for any sequence $Q$ of at most $k$ points on $h$, $\dF(P,Q) > r$.

For this purpose, we define the primitive $\prefix_h^\epsilon(S,r)$ for a sequence $S$ of points and a radius $r$. 
It returns a prefix $S[1,k]$ of $S$, such that $\ball_h^\eps(S[1,k])$ returns a radius that is at most $r$ where as $\ball_h^\eps(S[1,k+1])$ returns a radius that is greater than $r$. In practice, $\prefix_h^\eps(S,r)$ returns the length of the prefix, so 0 if it is empty. Thus, if $\prefix_h^\eps(S,r)=k$, $0 < k < n$, then $\radius_h(S[1,k]) \le \frac{r}{1-\eps}$ and $\radius_h(S[1,k+1]) > r$. (If $\prefix_h^\eps(S,r) = 0$, then $\radius_h(S[1,1]) > r$, and if $\prefix_h^\eps(S,r) = n$, then $\radius_h(S[1,n]) \le \frac{r}{1-\eps}$.) 

\begin{observation}
\label{obs:prefix}
    If $\prefix_h(S,r)$ returns the largest $k'$ for which $\radius_h(S[1,k']) \le  r$, then   
    $\prefix_h(S,r) \le \prefix_h^\eps(S,r) \le \prefix_h(S,\frac{r}{1-\eps})$.
\end{observation}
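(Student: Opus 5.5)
The argument rests on two monotonicity facts about radii along prefixes, together with the sandwich $(1-\eps)\radius_h(Q) \le \radius_h^\eps(Q) \le \radius_h(Q)$ guaranteed by the coreset implementation of $\ball_h^\eps$.

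The first fact is that $\radius_h(S[1,t])$ is nondecreasing in $t$, because a ball containing $S[1,t+1]$ also contains $S[1,t]$. Consequently $\prefix_h(S,\rho)$ is well defined: the set of $t$ with $\radius_h(S[1,t])\le\rho$ is an initial segment $\{0,1,\dots,\prefix_h(S,\rho)\}$. The second fact concerns $\radius_h^\eps$, which need not be monotone in $t$. For this reason $\prefix_h^\eps(S,r)$ should be read as the index $k$ at which a binary search (or linear scan) stops, that is, $\radius_h^\eps(S[1,k])\le r<\radius_h^\eps(S[1,k+1])$, with the boundary conventions for $k=0$ and $k=n$. I only use this local characterization.

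For the lower bound, put $k'=\prefix_h(S,r)$ and suppose, for contradiction, that $k<k'$. Then $k+1\le k'$, so monotonicity of exact radii gives $\radius_h(S[1,k+1])\le\radius_h(S[1,k'])\le r$. Hence $\radius_h^\eps(S[1,k+1])\le\radius_h(S[1,k+1])\le r$, which contradicts the stopping condition $\radius_h^\eps(S[1,k+1])>r$. So $k\ge k'$; the case $k=n$ is trivial.

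For the upper bound, if $k=0$ there is nothing to prove. Otherwise $\radius_h(S[1,k])\le \radius_h^\eps(S[1,k])/(1-\eps)\le r/(1-\eps)$. Since $\prefix_h(S,r/(1-\eps))$ is the largest index whose exact radius is at most $r/(1-\eps)$, it is at least $k$.

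The main obstacle is the non-monotonicity of $\radius_h^\eps$. If $\prefix_h^\eps$ were instead defined as the largest $k$ with $\radius_h^\eps(S[1,k])\le r$, the lower bound would follow even more directly, since $\radius_h^\eps(S[1,k'])\le r$. I would therefore write the proof so that it depends only on the two inequalities at the stopping index $k$. That way it is valid for either reading of the definition, and in particular for whatever index the binary search actually returns.
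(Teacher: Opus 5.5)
Your proof is correct and is essentially the paper's argument: the paper justifies the observation by the remark just before it. That remark says the stopping index $k$ of $\prefix_h^\eps$ satisfies $\radius_h(S[1,k])\le \frac{r}{1-\eps}$ and $\radius_h(S[1,k+1])>r$, via the sandwich $(1-\eps)\radius_h\le\radius_h^\eps\le\radius_h$, and monotonicity of $\radius_h$ along prefixes then gives both inequalities. Relying only on the local stopping condition, without assuming $\radius_h^\eps$ is monotone, is a worthwhile clarification the paper leaves implicit. Monotonicity can indeed fail, since the coreset unions for different canonical decompositions need not be nested.
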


To implement $\prefix_h^\epsilon(S,r)$ we perform a binary search. That is, we set $i = \lfloor \frac{|S|}{2} \rfloor$ and call $\ball_h^\eps(S[1,i])$. Now, if $\radius_h^\eps(S[1,i]) \le r$, then we increase~$i$, and otherwise we decrease $i$, etc. Thus, the cost of a call to $\prefix_h^\epsilon$ with a subcurve of $P$ is $O(c_\eps\log^2 n)$ time.

We now present the decision algorithm (see Algorithm~\ref{alg:approx_dec-k-points}).

\begin{algorithm}[h]
\begin{algorithmic}
    \Function{ApproxDecision}{$i,h,k,r$}
        \If{$i=n+1$}
            \State return \textsc{true}
        \EndIf 
        \If{$k=0$}
            \State return \textsc{false}
        \EndIf
        \State $l\gets \prefix_h^\epsilon(P[i,n],r)$
        \If{$l=0$}  \Comment{$p_i$ is too far from $h$}
            \State return \textsc{false}
        \EndIf
        \State return \Call{ApproxDecision}{$i+l,h,k-1,r$}
    \EndFunction
    \end{algorithmic}
\caption{The decision algorithm: Given a distance $r > 0$, determine if there exists a sequence of at most $k$ points on $h$ at discrete Fr\'echet distance at most $\frac{r}{1-\eps}$ from $P[i,n]$.}
\label{alg:approx_dec-k-points}
\end{algorithm}

The following lemma follows from Observation~\ref{obs:prefix}.
\begin{lemma}
Let $h$ be a $\dm$-dimensional flat in $\reals^d$, $k \ge 1$ an integer, and $r > 0$. If the call $\textsc
{\textup{AproxDecision}}(1,h,k,r)$ returns \textsc{true}, then there exists a sequence $Q$ of $k'\le k$ points on $h$, such that $\dF(P,Q)\le \dfrac{r}{1-\eps}$. If it returns \textsc{false}, then for any sequence $Q$ of $k'\le k$ points on $h$, $\dF(P,Q)> r$.
The running time is $k'$ times the cost of a call to $\prefix_h^\eps$ with a subcurve of~$P$.
\end{lemma}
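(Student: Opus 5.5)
The plan is to mirror the proof of the exact decision lemma in Section~\ref{sec:path}, replacing $\prefix_\ell$ by $\prefix_h^\eps$ and using Observation~\ref{obs:prefix} for the two one-sided guarantees. The running time bound is immediate: each recursive level makes exactly one call to $\prefix_h^\eps$ on a suffix $P[i,n]$, and the recursion depth is at most the number $k'\le k$ of prefixes produced.

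\emph{The ``true'' direction.} Suppose the call returns \textsc{true} after producing prefixes $P[i_0,i_1-1],P[i_1,i_2-1],\dots,P[i_{k'-1},n]$, where $i_0=1$ and $k'\le k$. For each of them, $\prefix_h^\eps$ guaranteed $\radius_h^\eps\le r$. Since $\radius_h^\eps\ge(1-\eps)\radius_h$, we get $\radius_h$ of that piece at most $r/(1-\eps)$. Take $q_t$ to be the center of the exact smallest ball on $h$ enclosing the $t$-th piece; equivalently, scale the ball returned by $\ball_h^\eps$ by $1/(1-\eps)$, which by the coreset property contains the piece. The walk that matches each piece to its point then has cost at most $r/(1-\eps)$. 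Hence $\dF(P,Q)\le r/(1-\eps)$ for $Q=(q_1,\dots,q_{k'})$.

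\emph{The ``false'' direction.} Here I argue the contrapositive. Assume there is $Q=(q_1,\dots,q_{k''})$ on $h$ with $k''\le k$ and $\dF(P,Q)\le r$. I show by induction on $k$, applied to suffixes $P[i,n]$, that \textsc{ApproxDecision}$(i,h,k,r)$ returns \textsc{true}.

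Take an optimal walk $W$, and let $l'$ be the last index of $P[i,n]$ matched to $q_1$, measured as a prefix length. Then $\radius_h(P[i,i+l'-1])\le r$, so $\prefix_h(P[i,n],r)\ge l'\ge 1$. By the left inequality of Observation~\ref{obs:prefix}, the returned $l$ satisfies $l\ge l'\ge1$, so the algorithm does not return \textsc{false} at the $l=0$ test. Moreover $P[i+l',n]$ is within distance $r$ of $(q_2,\dots,q_{k''})$, with $k''-1\le k-1$ points. Deleting a prefix of $P$ from a matched block, or dropping leading points of $Q$ whose blocks become empty, preserves $\dF\le r$ (restrict the walk). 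Therefore $P[i+l,n]$ is within $r$ of at most $k-1$ points on $h$, unless $i+l=n+1$, in which case the call returns \textsc{true} directly. The induction hypothesis then gives \textsc{true}. The base case $k=1$ is handled the same way: a single point forces $l'=|P[i,n]|$, hence $l=|P[i,n]|$.

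The only delicate step is the monotonicity used here, namely that the approximate prefix is never shorter than the exact one at radius $r$, together with the suffix-closure of the feasibility condition. The first is exactly the left inequality of Observation~\ref{obs:prefix}, which holds because $\radius_h^\eps\le\radius_h$. The second follows from the restriction argument used in the exact decision lemma.
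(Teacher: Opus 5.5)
Your proposal is correct and takes the same route as the paper. The paper only says the lemma follows from Observation~\ref{obs:prefix}; in effect it reruns the exact \textsc{Decision} argument with $\prefix_h^\eps$ in place of $\prefix_\ell$, using $\radius_h^\eps\ge(1-\eps)\radius_h$ for the \textsc{true} direction and the left inequality $\prefix_h\le\prefix_h^\eps$ (with suffix-restriction of walks) for the \textsc{false} direction, which are exactly the details you supply.
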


Finally, we run the optimization algorithm in Section~\ref{sec:path}, using the decision algorithm above (and replacing $\radius_\ell$ by $\radius_h^\eps$). 
The following theorem summarizes the section's main result.
\begin{theorem}
  \label{thm:path_higher}
  Let $P$ be a polygonal curve with $n$ vertices in $\reals^d$. Let $0 < \eps_0 < 1$, and set $\eps = \frac{\eps_0}{1+\eps_0}$. One can preprocess $P$ in time $O(n\log n + \frac{n}{\eps^{3(d-1)/2}})$ into a data structure of size $O(n\log n + \frac{n}{\eps^{(d-1)/2}})$, so that given a $g$-dimensional flat $h$ and a positive integer $k$, an at-most-$k$-vertex curve $Q$ on $h$ such that $\dF(P,Q) \le \frac{\dF(P,Q^*)}{1-\eps}=(1+\eps_0)\dF(P,Q^*)$ (and the corresponding distance) can be found in $O(\frac{k^2\log^2 n}{\eps^{(d-1)/2}})$ time, where $Q^*$ is such a curve that minimizes the discrete Fr\'echet distance to $P$.
\end{theorem}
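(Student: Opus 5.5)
We prove Theorem~\ref{thm:path_higher} in three parts: preprocessing and space, correctness of the approximation factor, and query time.

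\emph{Preprocessing and space.} We build the hierarchical decomposition of $P$ as in Section~\ref{sec:data}. At every canonical subcurve we store an $(\eps/4)$-coreset for directional width, which is an $\eps$-coreset for farthest neighbors, of size $c_\eps=O(1/\eps^{(d-1)/2})$. The decomposition itself takes $O(n\log n)$ space if each node stores its index range together with a pointer to its coreset. The coresets are stored explicitly; there are $O(n)$ canonical nodes, each holding a coreset of size at most $\min(c_\eps,|P[i,j]|)$. Summing $\min(c_\eps,n/2^t)$ over the levels $t$ gives $O(n\log n + n c_\eps)$ space in the worst case. Construction costs $O(|S|+c_\eps^3)$ per node, which sums to $O(n\log n + n c_\eps^3)$ time; this matches the stated preprocessing bound with $c_\eps^3 = 1/\eps^{3(d-1)/2}$.

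\emph{Correctness.} Let $r^*=\dF(P,Q^*)$. Following the proof of Lemma~\ref{lem:opt-k-points-works}, we run \textsc{Optimization} with \textsc{ApproxDecision} and $\radius_h^\eps$ in place of the exact primitives. The key invariant is the guarantee of the approximate decision lemma:
\begin{itemize}
\item a \textsc{true} answer at $r$ certifies a curve with $\dF\le r/(1-\eps)$;
\item a \textsc{false} answer certifies $r<r^*$.
\end{itemize}
The candidate radii $\radius_h^\eps(P[i,j])$ are monotone in $j$ up to the approximation. Even without monotonicity, the binary search ends at an index $j$ where \textsc{ApproxDecision} is \textsc{true} at $d_1=\radius_h^\eps(P[i,j])$ and \textsc{false} at the predecessor value. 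The induction then mirrors Lemma~\ref{lem:opt-k-points-works}: the returned value $r$ satisfies $r\le r^*$ and yields a curve with distance at most $r/(1-\eps)\le r^*/(1-\eps)=(1+\eps_0)r^*$.

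To show $r\le r^*$, consider the optimal partition from $Q^*$. Its first block $P[i,j^*]$ has $\radius_h^\eps(P[i,j^*])\le\radius_h(P[i,j^*])\le r^*$. By Observation~\ref{obs:prefix}, $\prefix_h^\eps$ at radius $r^*$ is at least $\prefix_h$, so the greedy decision succeeds at $r^*$. This is the exchange argument of the exact case with one-sided slack. The curve itself is formed by the centers of the computed balls scaled by $1/(1-\eps)$; by the coreset property these balls cover the assigned blocks.

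\emph{Query time.} By Observation~\ref{obs:path-generic} with $D(n)=O(c_\eps\log n)$ we would get $O(k^2c_\eps\log^3 n)$, which is a $\log n$ factor too slow. Removing this factor is the main obstacle, and it uses the same incremental idea as in Section~\ref{sec:impl-prefix}.

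In $\prefix_h^\eps$, we first locate the canonical piece by sweeping over the $O(\log n)$ canonical subcurves of $P[i,n]$. We then descend inside that piece, maintaining the merged coreset. At each step a canonical subcurve is added or the last one is replaced, and Welzl's algorithm is rerun on the union, which has size $O(c_\eps\log n)$.

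One $\prefix$ call naively costs $O(c_\eps\log^2 n)$, giving $O(kc_\eps\log^2 n)$ per decision. The optimization makes $O(k\log n)$ decision calls, so we must save one more $\log n$ factor. The plan is to reuse the fact that the outer binary search over $j$ changes only the last canonical block, and to perform only $O(1)$ ball computations per descent step.

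If the bookkeeping does not reduce the total to $O(k^2c_\eps\log^2 n)$, the fallback is to cite the exact \textsc{Optimization} analysis step by step. There, the cost charged per recursion level is $O(\log n)$ decision calls, and each decision would need to cost $O(kc_\eps\log n)$. Verifying this accounting is where I expect the difficulty to lie.
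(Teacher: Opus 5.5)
\paragraph{Overall assessment.} Your preprocessing and space accounting, and your correctness route, coincide with the paper's. That route is: farthest-neighbor coresets on the hierarchical decomposition, $\ball_h^\eps$ via Welzl's algorithm on the union of $O(\log n)$ coresets, Observation~\ref{obs:prefix}, and Algorithm~\ref{alg:opt-k-points} driven by \textsc{ApproxDecision}.

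\paragraph{The gap: query time.} The genuine gap is the query time, and the saving you sketch does not materialize. One $\ball_h^\eps$ evaluation on the current union of coresets already costs $\Theta(c_\eps\log n)$, and the descent inside $\prefix_h^\eps$ has $\Theta(\log n)$ steps. Knowing that only the last canonical piece changed does not help if Welzl's algorithm is rerun on the whole union. You also cannot amortize across the outer binary search over $j$: every probe calls \textsc{ApproxDecision} with a different radius $r=\radius_h^\eps(P[i,j])$, so its $O(k)$ prefix computations start afresh. The ``only the last block changes'' trick in the proof of Lemma~\ref{lem:opt-k-points-works} pays off only because the planar $\disk_\ell$ on $m$ canonical sets costs $O(m^2\log^2 n)$. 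Here the radius probes total $O(kc_\eps\log^2 n)$ and are not the bottleneck. What your argument establishes is therefore $O(k\log n)$ decision calls of cost $O(kc_\eps\log^2 n)$ each, i.e.\ $O(k^2c_\eps\log^3 n)$ expected time.

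Be aware that the paper's own argument stops at the same point. It states that $\prefix_h^\eps$ costs $O(c_\eps\log^2 n)$ and the decision $k$ times that, and then simply runs the optimization of Section~\ref{sec:path}. These ingredients multiply out to $O(k^2c_\eps\log^3 n)$, in agreement with Observation~\ref{obs:path-generic} for $D(n)=O(c_\eps\log n)$. So the stated $\log^2 n$ is not derived there.

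What is needed is exactly your fallback: a $\prefix_h^\eps$ running in $O(c_\eps\log n)$ time. For $g=1$ this is available via feasible intervals, as in Section~\ref{sec:feasible}:
\begin{itemize}
\item For a canonical coreset $C$, the centers on $h$ within distance $r$ of all of $C$ form an interval, computable in $O(c_\eps)$ time.
\item The descent intersects one new interval per step, so $\prefix_h^\eps$ costs $O(c_\eps\log n)$. Any decomposition of the prefix into canonical pieces gives a coreset of it, so the sandwich of Observation~\ref{obs:prefix} survives.
\item The decision then costs $O(kc_\eps\log n)$ and the optimization $O(k^2c_\eps\log^2 n)$.
\end{itemize}
For $g\ge 2$ the feasible sets are intersections of $g$-balls in $h$. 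Testing a union of them is again an LP-type problem with $\Theta(c_\eps\log n)$ constraints, so this does not obviously extend.

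\paragraph{A smaller correctness issue.} Showing that \textsc{ApproxDecision} succeeds at $r^*$ does not bound the value returned by the optimization, which only probes the values $\radius_h^\eps(P[i,j])$. You need the case split of Lemma~\ref{lem:opt-k-points-works}. If $d_1>r^*$, then $\radius_h(P[i,j])\ge d_1>r^*$. Hence the optimal first block ends before $p_j$, and the recursive instance $P[j,n]$ with $k-1$ vertices has optimum at most $r^*$.

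There is a second problem. Put $d'=\radius_h^\eps(P[i,j-1])$. The exact-case fact $d_2\ge d'$ is no longer guaranteed. So the assembled curve is certified only at $\max\{d',d_2\}/(1-\eps)$, not at $\min\{d_1,d_2\}/(1-\eps)$ as you assert. Returning $\min\{d_1,\max\{d',d_2\}\}$ repairs this. By induction, the returned value is then at most the optimum, and the returned curve has discrete Fr\'echet distance at most (value)$/(1-\eps)$ to $P[i,n]$. This uses $d'<r^*$, which follows from the \textsc{false} answer at $d'$.
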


\section{Discussion and open problems}

We mention several open problems:
\begin{itemize}
\item We did not strive to optimize the query time. How far can it be reduced, without substantially increasing the space and preprocessing time, in Theorems~\ref{thm:path}, \ref{thm:tree}, \ref{thm:region}, and~\ref{thm:path_higher}?
\item Can the factor $k^2$ in the query time in these theorems be made linear in $k$?  Or at least subquadratic?
\item Several of the methods we have described can be combined in straightforward ways.  For example, we can solve the geometric tree version of the problem exactly in $\reals^d$ for $L_\infty$ interpoint distances, or approximately for $L_2$, by combining the tools from Section~\ref{sec:trees} and Section~\ref{sec:approx}. 
\item Further, we can obtain an approximate solution in $\reals^d$ with the simplified curve~$Q$ restricted to lie in a polyhedral region~$R$ (of any dimension up to and including~$d$) by observing that our approximating function is in fact the actual smallest-enclosing ball radius function for a coreset and thus convex, so an approximation can be computed by using this function instead.  This combines the tools from Section~\ref{sec:region} and Section~\ref{sec:approx}.  If $R$ is bounded by $m$ faces of all dimensions, all simplicial, the cost of the query would be proportional to~$m$.  In the special case when~$R$ is a convex polyhedron with~$f$ facets, the processing can be conceivably sped up as a function of~$f$, as when solving a linear program, though we did not investigate this direction further at this time.
\end{itemize}

\bibliographystyle{abbrvurl}
\bibliography{refs}

\begin{thebibliography}{10}

\bibitem{agarwal2017range}
P.~K. Agarwal.
\newblock Range searching.
\newblock In {\em Handbook of Discrete and Computational Geometry}, pages
  1057--1092. Chapman and Hall/CRC, 2017.

\bibitem{AgarwalHMW05}
P.~K. Agarwal, S.~Har{-}Peled, N.~H. Mustafa, and Y.~Wang.
\newblock Near-linear time approximation algorithms for curve simplification.
\newblock {\em Algorithmica}, 42(3-4):203--219, 2005.
\newblock URL: \url{https://doi.org/10.1007/s00453-005-1165-y}, \href
  {https://doi.org/10.1007/S00453-005-1165-Y}
  {\path{doi:10.1007/S00453-005-1165-Y}}.

\bibitem{oracle-SoCG}
B.~Aronov, T.~Farhana, M.~J. Katz, and I.~Ramesh.
\newblock Discrete {F}r\'{e}chet distance oracles.
\newblock In W.~Mulzer and J.~M. Phillips, editors, {\em 40th International
  Symposium on Computational Geometry (SoCG 2024)}, pages 10:1--10:14, 2024.
\newblock URL:
  \url{https://drops.dagstuhl.de/entities/document/10.4230/LIPIcs.SoCG.2024.10},
  \href {https://doi.org/10.4230/LIPIcs.SoCG.2024.10}
  {\path{doi:10.4230/LIPIcs.SoCG.2024.10}}.

\bibitem{BeregJWYZ08}
S.~Bereg, M.~Jiang, W.~Wang, B.~Yang, and B.~Zhu.
\newblock Simplifying 3{D} polygonal chains under the discrete {F}r{\'{e}}chet
  distance.
\newblock In {\em {LATIN} 2008: Theoretical Informatics, 8th Latin American
  Symposium, B{\'{u}}zios, Brazil, April 7-11, 2008, Proceedings}, volume 4957
  of {\em Lecture Notes in Computer Science}, pages 630--641. Springer, 2008.
\newblock \href {https://doi.org/10.1007/978-3-540-78773-0_54}
  {\path{doi:10.1007/978-3-540-78773-0_54}}.

\bibitem{query-BLR08}
P.~Bose, S.~Langerman, and S.~Roy.
\newblock Smallest enclosing circle centered on a query line segment.
\newblock In {\em Proceedings of the 20th Annual Canadian Conference on
  Computational Geometry, Montr{\'{e}}al, Canada, August 13-15, 2008}, 2008.

\bibitem{aggr13}
P.~Brass, C.~Knauer, C.~Shin, M.~H.~M. Smid, and I.~Vigan.
\newblock Range-aggregate queries for geometric extent problems.
\newblock In A.~Wirth, editor, {\em Nineteenth Computing: The Australasian
  Theory Symposium, {CATS} 2013, Adelaide, Australia, February 2013}, volume
  141 of {\em {CRPIT}}, pages 3--10. Australian Computer Society, 2013.
\newblock URL:
  \url{http://crpit.scem.westernsydney.edu.au/abstracts/CRPITV141Brass.html}.

\bibitem{BringmannC20}
K.~Bringmann and B.~R. Chaudhury.
\newblock Polyline simplification has cubic complexity.
\newblock {\em J. Comput. Geom.}, 11(2):94--130, 2020.
\newblock URL: \url{https://doi.org/10.20382/jocg.v11i2a5}, \href
  {https://doi.org/10.20382/JOCG.V11I2A5} {\path{doi:10.20382/JOCG.V11I2A5}}.

\bibitem{DriemelH13}
A.~Driemel and S.~Har{-}Peled.
\newblock Jaywalking your dog: Computing the {F}r{\'{e}}chet distance with
  shortcuts.
\newblock {\em {SIAM} J. Comput.}, 42(5):1830--1866, 2013.
\newblock \href {https://doi.org/10.1137/120865112}
  {\path{doi:10.1137/120865112}}.

\bibitem{eppstein1992dynamic}
D.~Eppstein.
\newblock Dynamic three-dimensional linear programming.
\newblock {\em ORSA Journal on Computing}, 4(4):360--368, 1992.

\bibitem{FanFKWZ15}
C.~Fan, O.~Filtser, M.~J. Katz, T.~Wylie, and B.~Zhu.
\newblock On the chain pair simplification problem.
\newblock In F.~Dehne, J.~Sack, and U.~Stege, editors, {\em Algorithms and Data
  Structures - 14th International Symposium, {WADS} 2015, Victoria, BC, Canada,
  August 5-7, 2015. Proceedings}, volume 9214 of {\em Lecture Notes in Computer
  Science}, pages 351--362. Springer, 2015.
\newblock \href {https://doi.org/10.1007/978-3-319-21840-3_29}
  {\path{doi:10.1007/978-3-319-21840-3_29}}.

\bibitem{Filtser18}
O.~Filtser.
\newblock Universal approximate simplification under the discrete
  {F}r{\'{e}}chet distance.
\newblock {\em Inf. Process. Lett.}, 132:22--27, 2018.
\newblock \href {https://doi.org/10.1016/j.ipl.2017.10.002}
  {\path{doi:10.1016/j.ipl.2017.10.002}}.

\bibitem{GuibasHMS93}
L.~J. Guibas, J.~Hershberger, J.~S.~B. Mitchell, and J.~Snoeyink.
\newblock Approximating polygons and subdivisions with minimum link paths.
\newblock {\em Int. J. Comput. Geom. Appl.}, 3(4):383--415, 1993.
\newblock \href {https://doi.org/10.1142/S0218195993000257}
  {\path{doi:10.1142/S0218195993000257}}.

\bibitem{aggr14}
P.~Gupta, R.~Janardan, Y.~Kumar, and M.~H.~M. Smid.
\newblock Data structures for range-aggregate extent queries.
\newblock {\em Comput. Geom.}, 47(2):329--347, 2014.
\newblock URL: \url{https://doi.org/10.1016/j.comgeo.2009.08.001}, \href
  {https://doi.org/10.1016/J.COMGEO.2009.08.001}
  {\path{doi:10.1016/J.COMGEO.2009.08.001}}.

\bibitem{sariel-book}
S.~Har-Peled.
\newblock {\em Geometric approximation algorithms}.
\newblock Number 173 in Mathematical Surveys and Monographs. American
  Mathematical Soc., 2011.

\bibitem{circle-query-ipl-KD08}
A.~Karmakar, S.~Roy, and S.~Das.
\newblock Fast computation of smallest enclosing circle with center on a query
  line segment.
\newblock {\em Inf. Process. Lett.}, 108(6):343--346, 2008.
\newblock URL: \url{https://doi.org/10.1016/j.ipl.2008.07.002}, \href
  {https://doi.org/10.1016/J.IPL.2008.07.002}
  {\path{doi:10.1016/J.IPL.2008.07.002}}.

\bibitem{constrained-cirle-RoyKDN09}
S.~Roy, A.~Karmakar, S.~Das, and S.~C. Nandy.
\newblock Constrained minimum enclosing circle with center on a query line
  segment.
\newblock {\em Comput. Geom.}, 42(6-7):632--638, 2009.
\newblock URL: \url{https://doi.org/10.1016/j.comgeo.2009.01.002}, \href
  {https://doi.org/10.1016/J.COMGEO.2009.01.002}
  {\path{doi:10.1016/J.COMGEO.2009.01.002}}.

\bibitem{heavy}
D.~D. Sleator and R.~E. Tarjan.
\newblock A data structure for dynamic trees.
\newblock {\em J. Comput. Syst. Sci.}, 26(3):362--391, 1983.
\newblock \href {https://doi.org/10.1016/0022-0000(83)90006-5}
  {\path{doi:10.1016/0022-0000(83)90006-5}}.

\bibitem{KerkhofKLMW19}
M.~van~de Kerkhof, I.~Kostitsyna, M.~L{\"{o}}ffler, M.~Mirzanezhad, and
  C.~Wenk.
\newblock Global curve simplification.
\newblock In M.~A. Bender, O.~Svensson, and G.~Herman, editors, {\em 27th
  Annual European Symposium on Algorithms, {ESA} 2019, September 9-11, 2019,
  Munich/Garching, Germany}, volume 144 of {\em LIPIcs}, pages 67:1--67:14.
  Schloss Dagstuhl - Leibniz-Zentrum f{\"{u}}r Informatik, 2019.
\newblock URL: \url{https://doi.org/10.4230/LIPIcs.ESA.2019.67}, \href
  {https://doi.org/10.4230/LIPICS.ESA.2019.67}
  {\path{doi:10.4230/LIPICS.ESA.2019.67}}.

\bibitem{KreveldLW20}
M.~J. van Kreveld, M.~L{\"{o}}ffler, and L.~Wiratma.
\newblock On optimal polyline simplification using the {H}ausdorff and
  {F}r{\'{e}}chet distance.
\newblock {\em J. Comput. Geom.}, 11(1):1--25, 2020.
\newblock URL: \url{https://doi.org/10.20382/jocg.v11i1a1}, \href
  {https://doi.org/10.20382/JOCG.V11I1A1} {\path{doi:10.20382/JOCG.V11I1A1}}.

\bibitem{emo-mec}
E.~Welzl.
\newblock Smallest enclosing disks (balls and ellipsoids).
\newblock In H.~A. Maurer, editor, {\em New Results and New Trends in Computer
  Science, Graz, Austria, June 20-21, 1991, Proceedings [on occasion of H.
  Maurer's 50th birthday]}, volume 555 of {\em Lecture Notes in Computer
  Science}, pages 359--370. Springer, 1991.
\newblock URL: \url{https://doi.org/10.1007/BFb0038202}, \href
  {https://doi.org/10.1007/BFB0038202} {\path{doi:10.1007/BFB0038202}}.

\end{thebibliography}

\end{document}